\numberwithin{equation}{section}
\numberwithin{algorithm}{section}
\newtheorem{theorem}{Theorem}[section]
\newtheorem{lemma}[theorem]{Lemma}
\newtheorem{corollary}[theorem]{Corollary}
\newcommand{\lt}{\left}
\newcommand{\rt}{\right}
\newcommand{\inti}[2]{[{#1},{#2}]}
\newcommand{\h}{h}
\newcommand{\rec}[1]{\bar{#1}}
\newcommand{\org}[1]{#1}
\newcommand{\mathsmall}{\small}
\newcommand{\algsmall}{\small}
\newcommand{\thmsmall}{\small}
\newcommand{\anlsmall}{\small}
\newcommand*{\cachesize}{\ensuremath{H}}
\newcommand*{\bwcost}{\ensuremath{Q}}
\def \ALL {{\textbf{:}}}
\def \TO {\ \textbf{:}\ }
\def \CMA {\ \textbf{,}\ }
\begin{document}

\setlength{\pdfpageheight}{\paperheight}
\setlength{\pdfpagewidth}{\paperwidth}

\title{A communication-avoiding parallel algorithm for the symmetric eigenvalue problem }
\author{
\IEEEauthorblockN{Edgar Solomonik}
\IEEEauthorblockA{ETH Zurich\\
Email: solomonik@inf.ethz.ch}
\and
\IEEEauthorblockN{Grey Ballard}
\IEEEauthorblockA{Sandia National Laboratory\\
Email: gmballa@sandia.gov}
\and
\IEEEauthorblockN{James Demmel}
\IEEEauthorblockA{University of California, Berkeley\\
Email: demmel@cs.berkeley.edu}
\and
\IEEEauthorblockN{Torsten Hoefler}
\IEEEauthorblockA{ETH Zurich\\
Email: htor@inf.ethz.ch}}

\date{}
\maketitle 

\begin{abstract}
Many large-scale scientific computations require eigenvalue solvers in a scaling regime where efficiency is limited by data movement.
We introduce a parallel algorithm for computing the eigenvalues of a dense symmetric matrix, which performs asymptotically less communication than previously known approaches.
We provide analysis in the Bulk Synchronous Parallel (BSP) model with additional consideration for communication between a local memory and cache.
Given sufficient memory to store $c$ copies of the symmetric matrix, our algorithm requires $\Theta(\sqrt{c})$ less interprocessor communication than previously known algorithms, for any $c\leq p^{1/3}$ when using $p$ processors.
The algorithm first reduces the dense symmetric matrix to a banded matrix with the same eigenvalues.
Subsequently, the algorithm employs successive reduction to $O(\log p)$ thinner banded matrices.
We employ two new parallel algorithms that achieve lower communication costs for the full-to-band and band-to-band reductions.
Both of these algorithms leverage a novel QR factorization algorithm for rectangular matrices.

\end{abstract}

\section{Introduction}
\label{sec:intro}

The eigenvalue decomposition of a symmetric matrix $A$ is $A = UDU^T$ where $D$ is a diagonal matrix of eigenvalues and the columns of the orthogonal matrix $U$ are the eigenvectors of $A$.
Dense symmetric eigensolvers typically reduce the matrix to a tridiagonal matrix with the same eigenvalues, compute the eigenvalues $D$ of this tridiagonal matrix \cite{Dhillon:2006:DIM:1186785.1186788}, and, if desired, apply the orthogonal transformation backwards to compute the eigenvectors $U$.
Although algorithms for tridiagonalizing a symmetric matrix require the same asymptotic amount of work as one-sided decompositions such as LU and QR factorization,
they have a more complex dependency structure, which makes communication-efficient parallelization challenging.
Efficient execution of scientific applications such as electronic structure methods, which compute eigenvalue decompositions of a sequence of symmetric matrices (see, e.g.\ Hartree-Fock method~\cite{PSP:1733252,fock_zp_1930}), requires scalable symmetric eigensolvers.

We analyze the scalability of parallel algorithms in a Bulk Synchronous Parallel (BSP) cost model~\cite{valiant1990bridging}.
In addition to quantifying horizontal communication (data movement between processors) and synchronization, we augment the BSP model with an additional bandwidth cost parameter for vertical communication (data movement between memory and cache).
There are known algorithms for 
Cholesky, LU, and QR factorization~\cite{snirmatmul,Tiskin2007179,SD_EUROPAR_2011}, which for $n\times n$ input matrices on a $p$-processor system, have horizontal communication complexity \(W=O(n^2/\sqrt{cp})\), require \(S=O(\sqrt{cp})\) synchronizations, and use $M=O(cn^2/p)$ memory per processor.
Most commonly, 2D processor grids are used by algorithms that achieve this communication complexity for $c=1$, but 3D processor grids and more complicated schemes are needed to achieve the complexity with any $c\in [1,p^{1/3}]$ and obtain practical performance improvements~\cite{SD_EUROPAR_2011}.
For Cholesky factorization, which is simpler than LU and QR, these algorithms attain communication lower bounds $W=O\left(\frac{n^3}{pM^{1/2}}\right)$~\cite{greygeneral2010} and $W\cdot S =\Omega(n^2)$~\cite{SCKD_TECHREP_2013}, for a range of $W$ parameterized by $c$.

The best previously known algorithms for solving the symmetric eigenvalue problem directly, use 2D parallelizations and achieve the cost $W=O(n^2/\sqrt{p})$.
We introduce algorithms that reduce the horizontal communication cost asymptotically by a factor of $\sqrt{c}$, while using a factor of $c$ more memory and $\sqrt{c}$ more synchronizations, in the same fashion as previously done for one-sided factorizations.
The new algorithms are generalizations of previously known approaches, and the flexibility offered by the parameter $c$ increases the dimensionality of the tuning space for symmetric eigensolver implementations. 
In particular, employing a large $c$ is attractive for bandwidth-constrained problems on massively-parallel architectures.

Our algorithms focus on reducing the symmetric matrix to thinner and thinner banded matrices with the same eigenvalues.
This ``successive band reduction'' approach \cite{SBR1,SBR2}, i.e. reducing to an intermediate banded matrix rather than directly to tridiagonal, has been used to reduce vertical communication and synchronization costs~\cite{BDK13-TR}.
Further, in practice, algorithms using a two-stage (full-to-banded and banded-to-tridiagonal) approach~\cite{Auckenthaler2011272,haidar2011parallel} have been shown to outperform libraries that reduce directly to tridiagonal (like ScaLAPACK \cite{SCALAPACK}).
However, a disadvantage of successive band reduction is increasing the number of back transformations, which are needed to compute eigenvectors.
Unlike the forward application of transformations whose computation cost scales linearly with the matrix band-width, known algorithms for back transformations require $O(n^3)$ operations for each intermediate band-width used.

The BSP model allows us to formulate and analyze algorithms as compositions of a set of common building-blocks.
We leverage algorithms for matrix multiplication and QR factorization within our symmetric eigensolvers.
For QR factorization, we provide an approach that extends approaches for tall-and-skinny matrices~\cite{demmel:A206} and square matrices~\cite{Tiskin2007179} to be efficient for arbitrary rectangular matrices.

We use these building blocks to define algorithms for reducing a dense matrix to a banded matrix, and a banded matrix to a thinner band-width, while preserving eigenvalues.
Our main algorithm combines these, using $O(\log p)$ intermediate band-widths.
The algorithm is work-efficient for computing eigenvalues, requires $O(n^2/\sqrt{cp})$ horizontal communication, $O(n^2\log p/\sqrt{cp})$ vertical communication, and $O(\sqrt{cp} \log^2 p)$ synchronizations (BSP supersteps).
Known approaches for back-transformations to compute eigenvectors require the same asymptotic amount of computation for matrices of any band-width, meaning our approach may require a computation cost of $O(n^3\log p/p)$ if all eigenvectors are needed.
We leave the analysis of back-transformation computation for future work, but propose a potential approach to reduce the number of intermediate band-widths needed by our symmetric eigensolver.

\section{Theoretical Cost Model}
\label{sec:model}

We use the Bulk Synchronous Parallel (BSP) model~\cite{valiant1990bridging} with an additional parameter to measure the cost of traffic between memory and cache.
We derive asymptotic bounds on the parallel running-time of our algorithms for this two-level architectural model, with consideration for both communication between processors and in the memory hierarchy of each processor.
The BSP model permits an all-to-all communication to be done with unit synchronization cost, which will allow us to construct BSP algorithms for general matrix distributions and compose them without significant overhead.

We employ cost notation typically used for the $\alpha$--$\beta$ communication model.
As all stored and communicated datasets in this paper consist exclusively of floating-point numbers, we quantify sizes in terms of `words' (floating-point numbers of a given precision).
We model the memory hierarchy of each processor by a main `slow' memory (i.e.\ DRAM) and a `fast' memory (i.e.\ cache).
We permit interprocessor (horizontal) communication to move data between main memories of different processors, and intraprocessor (vertical) communication to move data between main memory and cache of a single processor.
Our architectural model is characterized by the following parameters:
\begin{itemize}
\item $p$ -- processors on a fully-connected network,
\item $M$ -- words of memory owned by each processor,
\item $H$ -- words of cache owned by each processor,
\item $\gamma$ -- time to compute a floating point operation,
\item $\beta$ -- time to send or receive a word,
\item $\nu$ -- time to move a word between cache and memory,
\item $\alpha$ -- time to perform a (global) synchronization.
\end{itemize}
We bound the cost of each algorithm by measuring four quantities:
\begin{itemize}
\item $F$ -- number of local floating point operations performed (computation cost),
\item $W$ -- number of words of data moved between processors (horizontal communication cost),
\item $\bwcost$ -- number of words of data moved between main memory and cache (vertical communication cost),
\item $S$ -- number of BSP supersteps (synchronization cost).
\end{itemize}
If at each superstep $i\in \inti 1S$, processor $j$ performs $F_i^j$ local operations, sends and receives $W_i^j$ total words, and performs $\bwcost_i^j$ reads and writes to memory, then the costs of the BSP algorithm are
{\anlsmall \[F=\sum_{i=1}^S\max_{j\in\inti 1p} F_i^j, \ \  
W=\sum_{i=1}^S\max_{j\in\inti 1p} W_i^j, \ \ 
\bwcost=\sum_{i=1}^S\max_{j\in\inti 1p} \bwcost_i^j,\]}
and the BSP execution time of this algorithm is
{\anlsmall $$T = \Theta(\gamma \cdot F + \beta \cdot W + \nu \cdot \bwcost + \alpha \cdot S).$$}
This model does not consider overlap between communication and computation (or between other costs), as such overlap does not affect the overall asymptotic time.

We simplify asymptotic cost expressions by assuming $\gamma\leq \beta$.
Further, we write only vertical communication terms which are not associated with horizontal communication or with computations that achieve a factor of $\sqrt{H}$ cache reuse (optimal for matrix multiplication \cite{Jia-Wei:1981:ICR:800076.802486}).
These simplifications correspond to the assumptions on the relative communication times, $\nu \leq \beta$
and the floating point rate $\nu \leq \gamma\cdot \sqrt{H}$.
However, general vertical communication cost upper-bounds may be obtained from our stated results for arbitrary $\nu$ by reinserting the term $O(\nu\cdot (F/\sqrt{H} +W))$.

We will provide asymptotic bounds for the BSP cost of all algorithms in the paper.
Sometimes, we will employ algorithms as building blocks whose cost has been analyzed in the standard $\alpha-\beta$ model, which is restricted to point-to-point messaging (pairwise synchronization).
These algorithms are trivially translated to the BSP model used in this paper, which is less restrictive (allows bulk synchronizations).

Throughout the paper, we will assume that matrix dimensions are greater than and divisible by the number of processors.
When it is clear that the asymptotic costs would not be affected, we will also omit floors and ceilings when subdividing the number of processors and matrix dimensions.

\section{Building Blocks}
\label{sec:prevwork}

We first state known results and provide minor extensions to quantify the complexity of matrix multiplication and of QR factorization in our cost model.
These results will be critical in the cost analysis of the new symmetric eigensolvers, which use matrix multiplication and QR factorization as subroutines.

  \subsection{Matrix Multiplication}
  \label{sec:prevwork:mm}

Our symmetric eigensolvers will perform matrix multiplications, often of nonsquare matrices.
We consider the BSP cost of multiplication of arbitrary rectangular matrices with any starting distribution.
Additionally, we specially consider the BSP cost of a matrix multiplication of a pre-replicated matrix with another matrix in an arbitrary distribution.
We start with the vertical communication cost of a matrix multiplication done by a single processor.
\begin{lemma}
\label{lem:seqMM}
The multiplication of matrices of dimensions $m\times n$ and $n\times k$ can be done by a single processor in time,
{\thmsmall 
\begin{align*}
O(\gamma\cdot mnk+ \nu\cdot\left[mn+mk+nk\right]).
\end{align*}
}
\end{lemma}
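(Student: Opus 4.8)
The plan is to prove the claimed bound by exhibiting a blocked multiplication schedule that achieves a factor of $\sqrt{H}$ reuse per data word moved, and then observing that the stated cost is simply the trivial lower bound when the matrix is too small or too skinny to amortize. First I would dispose of the trivial regime: if $mn + mk + nk \geq $ (roughly) the problem is so small that just reading the inputs and writing the output costs $O(\nu\cdot(mn+mk+nk))$ and this already dominates, so the bound holds by reading everything into cache (when it fits) or streaming. So assume the interesting regime where the arrays are large compared to $H$.

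Next I would describe the classical tiling argument. Partition the $m\times k$ output matrix $C$ into tiles of size $b\times b$ with $b = \Theta(\sqrt{H})$ (assuming $H$ is large enough; if $H < $ some small constant the bound is again trivial). For each output tile, we stream through the corresponding $b\times n$ block-row of $A$ and $n\times b$ block-column of $B$ in sub-blocks of size $b\times b$, keeping the running $b\times b$ accumulator of $C$ resident in cache. Each $b\times b$ multiply-accumulate costs $\Theta(b^3) = \Theta(H^{3/2})$ flops and moves $\Theta(b^2) = \Theta(H)$ words, giving $\Theta(\sqrt H)$ reuse. Summing over all $(m/b)(k/b)(n/b)$ sub-block products yields $F = \Theta(mnk)$ flops and vertical traffic $\Theta(mnk/\sqrt H)$, plus $O(mk)$ for writing $C$ once. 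Under the paper's standing simplification — namely that we omit vertical-communication terms that achieve $\sqrt H$ cache reuse (the assumption $\nu \le \gamma\sqrt H$) — the $mnk/\sqrt H$ term is absorbed into the $\gamma\cdot mnk$ computation term, and what remains to report is $\nu\cdot(mn+mk+nk)$ for reading the two inputs and writing the output. This gives exactly the stated expression $O(\gamma\cdot mnk + \nu\cdot[mn+mk+nk])$.

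I would be a little careful about edge cases where one dimension is much smaller than $\sqrt H$ (e.g.\ $n \ll \sqrt H$, a very ``flat'' inner product): then one simply chooses the tile shape to match the available dimension, or notes that in that case $mn + mk + nk$ is comparable to or larger than $mnk/\sqrt H$ anyway, so no blocking is needed and loading the inputs suffices. The key point is that $mn + mk + nk$ is always an upper bound on the unavoidable cost of touching the data once, and $mnk/\sqrt H$ is an upper bound on the rest, and the latter is suppressed by the model's conventions.

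The main obstacle — really the only subtlety — is handling the boundary between ``blocking helps'' and ``blocking is unnecessary'' cleanly, i.e.\ making sure the tiling argument and the trivial argument together cover all shapes $(m,n,k)$ and all cache sizes $H$ without a gap, and confirming that in every regime the non-$\gamma$ cost is genuinely $O(\nu\cdot(mn+mk+nk))$ after the model's simplifications are applied. This is standard (it is essentially the Hong–Kung / Jia-Wei I/O-optimality result invoked in the text), so I expect the proof to be short, with the only real work being the case analysis on the relative sizes of $m,n,k$ and $\sqrt H$.
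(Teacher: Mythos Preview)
Your argument is correct and complete in spirit, but it differs from the paper's treatment in the choice of underlying algorithm. The paper does not actually give a proof: it simply cites the cache-oblivious \textsc{Rec-Mult} algorithm of Frigo, Leiserson, Prokop, and Ramachandran (Theorem~1 of~\cite{FLPR99}), which achieves vertical communication $O(mn+mk+nk + mnk/\sqrt{H})$ for \emph{any} cache size without knowing $H$, and then invokes the model convention $\nu\le\gamma\sqrt{H}$ to absorb the $mnk/\sqrt{H}$ term into the computation cost. Your tiling argument is the classical cache-\emph{aware} counterpart: it picks $b=\Theta(\sqrt{H})$ explicitly and therefore needs the edge-case analysis you sketch for skinny dimensions. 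What your approach buys is self-containment (no external citation needed) at the price of that case analysis; what the paper's citation buys is that \textsc{Rec-Mult} recurses on the largest dimension and so handles all shapes $(m,n,k)$ uniformly, making your ``main obstacle'' disappear automatically. Both routes lead to the identical bound, and the key observation---that the $\nu\cdot mnk/\sqrt{H}$ term is suppressed by the paper's standing assumption---is the same in each.
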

The Rec-Mult algorithm \cite[Theorem 1]{FLPR99} obtains the vertical communication cost given in Lemma~\ref{lem:seqMM}.
We omit the usual term $O(\nu\cdot mnk/\sqrt{H})$, since we have $\nu \leq \gamma\cdot \sqrt{H}$.

We now consider the full BSP cost of parallel rectangular matrix multiplication. 
The communication cost of square matrix multiplication is well known~\cite{mccol_tiskin_99,dekel:657,matmul3d,snirmatmul,berntsen1989communication,Johnsson:1993:MCT:176639.176642}.
The horizontal costs of rectangular matrix multiplication have also been analyzed within the $\alpha$--$\beta$ communication model, where a recursive algorithm was proposed \cite{demmel2013communication} that attains the communication lower bound.
We show that the algorithm in \cite{demmel2013communication} can be executed within the time specified in the subsequent Lemma, for any initial load balanced distribution of the matrices.
It is possible to also design different matrix multiplication algorithms in the BSP model with a $\Theta(\log p)$ factor less in synchronization cost, but the overall synchronization costs of our QR and symmetric eigensolve algorithms (which use the subsequent Lemma) would not be affected.
We parameterize the memory used by the algorithm by a parameter $v$, which controls how many block matrix multiplications are performed by each processor.
\begin{lemma}
\label{lem:parMM}
For any $v\geq 1$, the multiplication of matrices of dimensions $m\times n$ and $n\times k$ in any load-balanced starting layout can be done in BSP time,
{\thmsmall
\begin{align*}
O\bigg(&\gamma\cdot \frac{mnk}{p}  + \beta\cdot \bigg[\frac{mn+nk+mk}{p}  \\
&+v^{1/3}\bigg(\frac{mnk}{p}\bigg)^{2/3}\bigg]    
+ \alpha \cdot v \log p\bigg),
    \end{align*}
}
using $M=O\big(\frac{mn+nk+mk}p+\big(\frac{mnk}{vp}\big)^{2/3}\big)$ memory.
\end{lemma}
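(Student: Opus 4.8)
The plan is to reduce the statement to the recursive rectangular multiplication algorithm of \cite{demmel2013communication} (call it the base algorithm), supplying two extra ingredients: a redistribution that removes the dependence on the starting layout, and an accounting of synchronization and memory in terms of the parameter $v$. Since the input is load-balanced, each processor initially holds $O\big(\frac{mn+nk+mk}{p}\big)$ words of the operands, and the native top-level layout of the base algorithm is replication-free, so it also asks each processor for $O\big(\frac{mn+nk+mk}{p}\big)$ words. First I would therefore perform a single all-to-all (one BSP superstep) to move each processor's data to its destination; this costs $O\big(\frac{mn+nk+mk}{p}\big)$ words of horizontal communication, which is inside the claimed $W$ and $M$. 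Applying the same argument to the product $C$ after the computation, we may assume WLOG that the base algorithm runs in its native layout and leaves $C$ distributed, with the two redistributions contributing only $O\big(\frac{mn+nk+mk}{p}\big)$ to $W$ and $O(1)$ to $S$.

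Next I would instrument the base algorithm with $v$. It proceeds by recursively halving the largest of $m$, $n$, $k$: halving an output dimension ($m$ or $k$) produces two independent subproblems, handed to disjoint halves of the processor set (a parallel split), while halving the contraction dimension $n$ produces two subproblems with the same output, either run in sequence with partial products accumulated in place (a sequential split) or, when memory allows, run on disjoint processor halves followed by a closing reduce-scatter (a memory-using split). Parallel splits can occur at most $O(\log p)$ times, after which each subproblem sits on one processor; choosing how many $n$-splits are sequential so that each leaf block multiplication (Lemma~\ref{lem:seqMM}) has working set $O\big((mnk/(vp))^{2/3}+\frac{mn+nk+mk}{p}\big)$ is exactly what the parameter $v$ controls, and it leaves each processor responsible for $O(v)$ leaf block multiplications performed in sequence. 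This gives the stated $M$.

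Finally I would total the four costs. Computation is $O(mnk/p)$ because the algorithm keeps flops balanced across all recursion levels. Horizontal communication splits into the $O\big(\frac{mn+nk+mk}{p}\big)$ needed to shuffle and store the operands down the $O(\log p)$ parallel-split levels (a geometric sum) and the $O\big(v^{1/3}(mnk/p)^{2/3}\big)$ incurred by the broadcasts and reductions at the base: a cubic base-case subproblem on $p'\le p$ processors has flop count $q^3=p'\,mnk/p$ and is carried out in $v$ sequential pieces, each of flop count $q^3/v$ and hence horizontal cost $O((q^3/(vp'))^{2/3})=O((mnk/(vp))^{2/3})$ words, so the $v$ pieces total $O(v^{1/3}(mnk/p)^{2/3})$; optimality of this term is the content of \cite{demmel2013communication}. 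Synchronization is $O(\log p)$ supersteps of reshuffling and broadcasting along a root-to-leaf path, replayed over the $O(v)$ sequentially processed copies of the work, for $O(v\log p)$ in all. The part I expect to be the real work is the simultaneous bookkeeping: for arbitrary and possibly very skewed aspect ratios $m,n,k$ one must choose where parallel splitting gives way to sequential splitting, and the number $v$ of sequential pieces, so that the memory bound and both communication terms hold at once, and in particular verify that the $\frac{mn+nk+mk}{p}$ terms --- which dominate whenever an operand is much larger than $mnk/p$ --- are never violated by the redistribution or by the native layout of the base case.
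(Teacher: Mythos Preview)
Your proposal is correct and takes essentially the same approach as the paper: both reduce to the CARMA algorithm of \cite{demmel2013communication} after an initial all-to-all redistribution costing $O\big(\beta\cdot\frac{mn+nk+mk}{p}+\alpha\big)$, then read off the computation, communication, latency, and memory bounds from that algorithm, parameterized by $v$ via $M\approx(mnk/(vp))^{2/3}$.

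The one presentational difference worth noting is how the horizontal-communication bound is established for skewed aspect ratios. You re-derive the recursive structure (parallel versus sequential splits) and flag the bookkeeping for ``very skewed'' $m,n,k$ as the real work. The paper dispatches exactly this point by invoking CARMA's three-case analysis: with $d_1\le d_2\le d_3$ the sorted dimensions, the 1D regime ($p<d_3/d_2$) and 2D regime ($d_3/d_2\le p\le d_2d_3/d_1^2$) each yield horizontal cost $O\big(\frac{mn+nk+mk}{p}\big)$, while only the 3D regime ($p>d_2d_3/d_1^2$) produces the $O\big(v^{1/3}(mnk/p)^{2/3}\big)$ term. If you carry out your plan, you will want to supply the equivalent check that in the non-3D regimes the $(mn+nk+mk)/p$ term indeed dominates; the paper's case split is the quickest way to do so.
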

\begin{proof}
We consider the cost of the recursive `CARMA' algorithm \cite{demmel2013communication}.
The algorithm assumes specific initial matrix layouts, but does not assume any initial data is replicated.
Therefore, starting from load balanced layouts, the BSP time to move to the layouts specified by CARMA is $O(\beta\cdot \frac{mn+nk+mk}{p}+\alpha)$.
Because the computation is load balanced, the computation cost is $O(\gamma\cdot mnk/p)$.
The latency cost of the CARMA algorithm is an upper-bound on the number of BSP supersteps necessary to execute it.
In \cite{demmel2013communication}, the latency cost is shown to be $O\big(\frac{mnk}{pM^{3/2}}\log p\big)=O(v\log p)$.
The communication cost of CARMA is presented in cases for 1D, 2D, and 3D processor grids.
We show that the postulated BSP time upper-bound holds for all cases.

We first argue that the vertical communication cost of the local matrix multiplications (given by Lemma~\ref{lem:seqMM}) is dominated by horizontal communication due to the assumption $\beta \geq \nu$.
In the 3D case, the operand matrix blocks are nearly square, and either one of the operands or the output is always communicated, so horizontal communication cost dominates vertical communication cost.
In the 1D and 2D cases, each processor performs a single local matrix multiplication, where the largest operand has size $O(\frac{mn+nk+mk}p)$, since it is the local block of the largest matrix, which is distributed across all processors. 

For the horizontal communication costs, we let $d_1=\min(m,n,k)$, $d_2=\mathrm{median}(m,n,k)$, and $d_3=\max(m,n,k)$ as in \cite{demmel2013communication}.
If $p< d_3/d_2$ (1D case), then $d_1d_2<d_1d_3/p$, so the provided cost $O(\beta\cdot d_1d_2)=O(\beta\cdot (mn+nk+mk)/p)$.
If $d_3/d_2 \leq p \leq d_2d_3/d_1^2$ (2D case), then the provided cost $O(\beta\cdot \sqrt{d_1^2d_2d_3/p})=O(\beta\cdot (mn+nk+mk)/p)$.
Finally, if $p>d_2d_3/d_1^2$ (3D case), the provided cost $O(\beta \cdot [mnk/(p\sqrt M)+(mnk/p)^{2/3}])=O(\beta\cdot v^{1/3}(mnk/p)^{2/3})$.

\end{proof}

The algorithm analyzed in Lemma~\ref{lem:parMM} allows any initial load balanced matrix distributions.
We now consider Algorithm~\ref{alg:StMM}, which assumes an initial distribution with replicated data and subsequently can multiply certain matrices in less time than given by Lemma~\ref{lem:parMM}.
In Algorithm~\ref{alg:StMM}, one of the input matrices is stored redundantly on $c=p^{2\delta-1}$ 2D processor grids for any $c\in[1,p^{1/3}]$ ($\delta\in[1/2,2/3]$).
The parameterization by $\delta$ is the same as $\alpha$ in~\cite{Tiskin2007179}, while $c$ is the same replication factor as in~\cite{SD_EUROPAR_2011}.
The parameter $w$ controls the number of supersteps (block matrix multiplications) in Algorithm~\ref{alg:StMM}.

The algorithm permits the distribution to be defined as a blocking of the matrices after permutation by $P^{(1)},P^{(2)}$.
Our analysis assumes the blocking is roughly, but not necessarily exactly load balanced, permitting the analysis to be used within a cyclic or block-cylic matrix factorization algorithm where different processors perform updates (matrix multiplications) with a slightly different amount of local data at each step.
We will employ Algorithm~\ref{alg:StMM} with cyclic distributions, for which $P^{(1)}_{ij}=1$ for $i=(j\bmod q)(m/q)+\lfloor j/q\rfloor$ and $P^{(2)}_{jk}=1$ for $k=(j\bmod q)(n/q)+\lfloor j/q\rfloor$.
On each processor grid layer, the algorithm executes a variant of the SUMMA algorithm~\cite{Geijn:SUMMA:1997}, which communicates the operand $B$ and reduces the output $C$.
This variant is chosen, since we will use the algorithm in cases where the operand $A$ is of greater size than $B$ and $C$.

\begin{lemma}\label{lem:repupd}
Consider Algorithm~\ref{alg:StMM} for multiplication of matrices $A$ and $B$ of dimensions $m\times n$ and $n\times k$, where the initial distributions of $A$ and $B$ satisfy the stated requirements for permutations $P^{(1)}$ and $P^{(2)}$ where each block $A_{ij}$ of $P^{(1)}AP^{(2)}$ has dimensions $O(m/p^{1-\delta})\times O(n/p^{1-\delta})$.
Then, using $M=O(mn/p^{2(1-\delta)}+(mk+nk)/(wp^{\delta}))$ memory for any $w\in\inti 1{p^{1-\delta}}$, the algorithm can be executed in BSP time,
{\thmsmall \begin{align*}
O\bigg(\gamma\cdot \frac{mnk}{p}  + \beta\cdot \frac{mk+nk}{p^\delta} + \alpha\cdot w\bigg),
\end{align*}
}
when $\cachesize\geq mn/p^{2(1-\delta)}$ and the copies of $A$ start inside cache, 
and otherwise with an extra cost of $O(\nu\cdot\frac{wmn}{p^{2(1-\delta)}})$. 
\end{lemma}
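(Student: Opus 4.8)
The plan is to analyze Algorithm~\ref{alg:StMM} by accounting separately for (i) the initial redistribution to the blocked layout, (ii) the per-layer SUMMA-style multiplication on each of the $c=p^{2\delta-1}$ processor grids, (iii) the reduction of the output $C$ across the $c$ layers, and (iv) the vertical communication. Since $A$ is assumed already replicated on all $c$ layers in the required blocked layout, the dominant horizontal cost will come from communicating $B$ into each layer and reducing $C$ out of each layer, which together carry only $O((mk+nk)/p^\delta)$ words; the point is that $A$, the largest operand, never moves.

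First I would set up the grid: each of the $c$ layers is a $q\times q$ grid with $q=p^{1-\delta}$, so $cq^2=p$. The block $A_{ij}$ has dimensions $O(m/q)\times O(n/q)$ by hypothesis, consistent with the $M=O(mn/p^{2(1-\delta)})$ term. On a single layer, the SUMMA variant broadcasts panels of $B$ along grid columns and accumulates $C$ along grid rows; splitting the $n$-dimension inner product into $w$ chunks gives $w$ supersteps, each broadcasting/reducing $O((mk+nk)/(qw)\cdot q) = O((mk+nk)/p^{1-\delta})$ words over a dimension of $q$ processors — wait, more carefully, per superstep each processor sends $O((mk+nk)/(p^{1-\delta} w))$-size messages and the whole multiplication on a layer moves $O((mk+nk)/p^{1-\delta})$ words per processor in $O(w)$ supersteps. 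Then the cross-layer reduction of $C$ (size $mk/p^{1-\delta}$ per layer-processor, summed over $c$ layers) costs $O(mk/p^{1-\delta})$ words and $O(1)$ supersteps (or folded into the $w$ supersteps). Since $p^{1-\delta}\ge p^{\delta}$ is false in general — here $\delta\ge 1/2$ so $p^{1-\delta}\le p^{\delta}$ — actually the right bookkeeping is that work is spread over all $p$ processors in the $C$-reduction, giving the stated $O((mk+nk)/p^\delta)$: each processor holds a $1/c = p^{1-2\delta}$ fraction, and $(mk+nk)/p^{1-\delta}\cdot$... I would verify that $p^{1-\delta}\cdot c^{?}$ lands on $p^\delta$; concretely $(mk+nk)/p^{1-\delta}$ is already $\ge (mk+nk)/p^\delta$ when $\delta\le 1/2$, so the clean way is to note the SUMMA broadcast/reduce cost per processor is $O((mk+nk)/(p^\delta))$ because each of the $q$ steps along a grid dimension moves a $1/q$-fraction but there are also $c$ independent layers sharing the rows of $B$, etc. — I will pin down the exact exponent by writing $mk/p^\delta = mk/(cq^2)\cdot c q = (mk/p)\cdot cq$ and matching.

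For the computation cost, the $mnk$ multiply-adds are split evenly over all $p$ processors (the replication of $A$ does not create redundant flops because each layer computes a disjoint slab of the $n$-contraction, or equivalently CARMA-style the layers partition the work), giving $O(\gamma\cdot mnk/p)$. The synchronization cost is $O(\alpha\cdot w)$ from the $w$ SUMMA steps plus $O(1)$ for the initial layout and final reduction, all absorbed into $O(\alpha\cdot w)$. For the vertical communication: if $H\ge mn/p^{2(1-\delta)}$ and the copies of $A$ begin in cache, then across the $w$ block-multiplications the only extra memory traffic is re-reading the $A$ block $w$ times if it is evicted — no, if it fits and starts in cache it need not be reloaded, so under that hypothesis the matrix-multiplication vertical cost is dominated by horizontal cost per Lemma~\ref{lem:seqMM} and the $\beta\ge\nu$ assumption, exactly as argued in the proof of Lemma~\ref{lem:parMM}; otherwise the $A$ block must be streamed from memory once per superstep, costing $O(\nu\cdot w\cdot mn/p^{2(1-\delta)})$, which is the stated extra term. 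Finally I would check the memory bound: each processor stores its $A_{ij}$ block ($O(mn/p^{2(1-\delta)})$ words) plus, per SUMMA step, a $1/w$-fraction of the panels of $B$ and $C$ it touches, i.e.\ $O((mk+nk)/(wp^\delta))$, matching the claim.

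The main obstacle I expect is getting the horizontal communication exponent exactly right — i.e.\ proving the $B$-broadcast and $C$-reduction cost is $O((mk+nk)/p^\delta)$ and not merely $O((mk+nk)/p^{1-\delta})$. This requires carefully using the two-dimensional structure of each layer (a factor $q=p^{1-\delta}$ saving along a grid dimension) together with the blocking/load-balance hypothesis, and confirming that the slight imbalance permitted by the ``roughly but not necessarily exactly load balanced'' assumption only changes constants. Everything else — flops, supersteps, the vertical-communication dichotomy, and the memory footprint — follows by the same template as Lemma~\ref{lem:parMM} together with Lemma~\ref{lem:seqMM}.
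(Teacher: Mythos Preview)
Your proposal misreads the structure of Algorithm~\ref{alg:StMM}, and that misreading is exactly why you cannot ``pin down the exact exponent.'' You treat the $c=p^{2\delta-1}$ layers as partitioning the inner dimension $n$ (so that a cross-layer reduction of $C$ is needed), and you treat the $w$ supersteps as chunking the $n$-dimension inner product. Neither is correct: the layers partition the \emph{output} column dimension $k$ (layer $l$ handles column-blocks $h\in\{l,l+c,\ldots,l+(w-1)c\}$), and the $w$ supersteps iterate over those column-blocks. Consequently there is no cross-layer reduction of $C$; each layer produces a disjoint set of columns of $C$. Your remark that ``each layer computes a disjoint slab of the $n$-contraction'' to justify load-balanced flops is also inconsistent with your own premise that $A$ is fully replicated on every layer.

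Once the partitioning is understood correctly, the horizontal bound is immediate and there is no obstacle. In each of the $w$ iterations, processor $\Pi[i,j,l]$ gathers $B_{jh}$ of size $O\big(\tfrac{n}{q}\cdot\tfrac{k}{wc}\big)$ along the $i$-dimension and reduce-scatters $\bar C_{ijh}$ of size $O\big(\tfrac{m}{q}\cdot\tfrac{k}{wc}\big)$ along the $j$-dimension, for a per-iteration cost of $O\big(\tfrac{mk+nk}{qwc}\big)$; summing over $w$ iterations gives $O\big(\tfrac{mk+nk}{qc}\big)=O\big(\tfrac{mk+nk}{p^\delta}\big)$ since $qc=p^{1-\delta}p^{2\delta-1}=p^\delta$. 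This is precisely the paper's argument. Your analyses of the computation cost, the $O(\alpha\cdot w)$ synchronization, the vertical-communication dichotomy (reloading $A_{ij}$ once per superstep when it does not fit in cache), and the memory footprint are all fine and match the paper, modulo the $n$-versus-$k$ confusion.
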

\begin{algorithm}
\caption {$[C]\gets \text{Streaming-MM}(A,B,P^{(1)},P^{(2)},\Pi)$}
\label{alg:StMM}
{\algsmall
\begin{algorithmic}[1]
\Require{Given positive integers $p,m,n,k,w$ and $\delta\in[1/2,2/3]$: $\Pi$ is a grid of $q\times q\times c$ processors with $q=p^{1-\delta}$ and $c=p^{2\delta-1}$, $A$ is $m\times n$, $B$ is $n\times k$. 
For each $l\in\inti 1c$, $\Pi[i,j,l]$ owns all elements in $A_{ij}$, defined by square permutation matrices {\small $P^{(1)}$, $P^{(2)}$, as
  \(P^{(1)}AP^{(2)} = \begin{bmatrix} A_{11} & \cdots & A_{1q} \\
                            \vdots & \ddots  & \vdots \\
                            A_{q1} & \cdots & A_{qq} \end{bmatrix} \).} \\
$B$ is in any load balanced layout over all $p$ processors. }  \State Let $z=wc$ 
  \State Partition $B$ into blocks:
{\small  \(P^{(2)}{}^TB = \begin{bmatrix} B_{11} & \cdots & B_{1z} \\
                           \vdots &   & \vdots \\
                           B_{q1} & \cdots & B_{qz} \end{bmatrix} \).}
  \State Redistribute $B$ so that each $\Pi[i,j,l]$ owns $k/(zq)$ columns of $B_{jh}$ for each  $h \in \{l,l+c,\ldots, l+(w-1)c\}$. \label{li:sm_redist}
  \Comment {Execute loop iterations in parallel}
  \For {$i\in \inti 1q, j \in \inti 1q, l \in \inti 1c$}
    \Comment {Execute loop iterations in sequence}
    \For {$h \in \{l,l+c,\ldots, l+(w-1)c\}$}
      \State Gather $B_{jh}$ on $\Pi[i,j,l]$ \label{alg:StMM:gather}
      \State Compute $\bar{C}_{ijh}=A_{ij}\cdot B_{jh}$ on $\Pi[i,j,l]$ \label{alg:StMM:compute}
      \State Reduce-scatter $C_{ih} = \sum_{j=1}^c \bar{C}_{ijh}$ so that each $\Pi[i,j,l]$ owns $k/(zq)$ columns of $C_{ih}$ \label{alg:StMM:reduce}
    \EndFor
  \EndFor
  \Require{$C=A\cdot B$ is distributed so that each processor in $\Pi$ owns $mk/p$ elements of $C$.}
\end{algorithmic}
}
\end{algorithm}

\begin{proof}
As required by Algorithm~\ref{alg:StMM}, $B$ starts in any load-balanced distribution over the $p$ processors.
As the initial layout is load-balanced the redistribution done on line~\ref{li:sm_redist} costs $O(\beta\cdot nk/p +\alpha)$.
The gather on line~\ref{alg:StMM:gather} and reduce-scatter on line~\ref{alg:StMM:reduce} are dual communication patterns with respect to each other.
Together, they cost $O(\beta\cdot (mk+nk)/(qcw)+\alpha)$, and over all $w$ iterations over index $h$ cost, $O(\beta\cdot (mk+nk)/(qc) + \alpha\cdot w)=O(\beta\cdot (mk+nk)/p^\delta +\alpha\cdot w)$.

The $w$ local matrix multiplications take time,
{\mathsmall \[O\left(\gamma\cdot \frac{mnk}{p}  + \nu\cdot\bigg(\frac{wmn}{p^{2(1-\delta)}}+\frac{mk+nk}{p^\delta}\bigg) \right),\]}
by Lemma~\ref{lem:seqMM}. However, if the entire matrix $A$ starts in cache, which is possible if
$\cachesize\geq mn/p^{2(1-\delta)}$, it suffices to read only the entries of $B_{jh}$ from memory
into cache and write the entries of $\bar{C}_{ijh}$ out to memory. In this case, the vertical communication cost is
 $O(\nu\cdot \frac{mk+nk}{qc})=O(\nu \cdot \frac{mk+nk}{p^\delta}).$
This term is dominated by the interprocessor communication term since $\beta \geq \nu$.
The memory usage corresponds to the storage necessary for each block: $A_{ij}$, $B_{jh}$, and $\bar{C}_{ijh}$, $M=O\left(\frac{mn}{p^{2(1-\delta)}} + \frac{mk+nk}{wp^\delta}\right)$.
\end{proof}

  \subsection{QR Factorization}
  \label{sec:prevwork:qr}

We will use QR factorization within our symmetric eigensolver algorithms to obtain orthogonal transformations that introduce zeros when applied to the symmetric matrix.
The vertical communication cost of executing a sequential QR factorization is proportional to that of matrix multiplication.
\begin{lemma}
\label{lem:seqqr}
The QR factorization of an $m\times n$ matrix $A$ with $m\geq n$ can be done by a single processor in time, 
{\thmsmall $$O( \gamma\cdot mn^2 + \nu \cdot mn).$$}
\end{lemma}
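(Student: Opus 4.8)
The plan is to reduce the claim to the matrix-multiplication bound of Lemma~\ref{lem:seqMM} by invoking a communication-efficient sequential (cache-oblivious or cache-aware) blocked QR factorization. The computation cost $O(\gamma\cdot mn^2)$ is standard: Householder QR on an $m\times n$ matrix with $m\ge n$ performs $\Theta(mn^2)$ flops, and any blocked reorganization (e.g.\ compact WY / storage-efficient representations of the Householder reflectors) performs the same asymptotic work. So the substance is entirely in the $O(\nu\cdot mn)$ vertical-communication term, i.e.\ showing that QR can be organized so that it moves only a constant-factor more words between memory and cache than simply reading the input and writing the output.

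The key steps I would carry out are as follows. First, recall that a recursive (divide-and-conquer on columns) QR factorization splits $A = [A_1\ \ A_2]$ with $A_1$ having $n/2$ columns, recursively factors $A_1 = Q_1 R_1$, applies $Q_1^T$ to $A_2$, and recursively factors the trailing $(m - n/2)\times (n/2)$ block of $Q_1^T A_2$. The only non-recursive work at each node is (i) applying the compact-WY representation of $Q_1$ to $A_2$, which is a constant number of matrix multiplications of matrices whose dimensions are bounded by those of $A$, and (ii) forming the triangular factor of the WY representation, again a triangular-matrix multiplication of the same shape. By Lemma~\ref{lem:seqMM}, the vertical communication of each such multiplication at the top level is $O(\nu\cdot(mn + n^2)) = O(\nu\cdot mn)$ since $m\ge n$. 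Second, I would set up the recurrence $Q(m,n) \le 2\,Q(m,n/2) + O(\nu\cdot mn)$ for the vertical-communication cost (with the two recursive calls on column-halves, noting the second call is on at most $m$ rows), with base case $Q(m,1) = O(\nu\cdot m)$ for a single Householder vector. Unrolling gives $\sum_{j} 2^j\cdot O(\nu\cdot m\,(n/2^j)) = O(\nu\cdot mn\log n)$ naively — so the third step is the important refinement: once a subproblem fits in cache, i.e.\ once its size $m'n' \le H$ (equivalently the block of columns is narrow enough that the active panel resides entirely in fast memory), the recursion terminates with cost linear in the subproblem size and no further traffic. Summing the $O(\nu\cdot m'n')$ leaf costs over a disjoint column partition of $A$ telescopes to $O(\nu\cdot mn)$, and the internal-node costs, now summed only over the $O(\log(n^2/H))$ levels above the cache-resident cutoff, also total $O(\nu\cdot mn)$ because the per-level cost is geometrically dominated by the top level (each level's total work in words is $O(\nu\cdot mn)$ but with the standard blocking one pays it only a constant number of times — this is exactly the cache-oblivious argument used for Rec-Mult, cf.\ the remark after Lemma~\ref{lem:seqMM}). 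Alternatively, and more simply, I would just cite the known result that sequential blocked/recursive QR attains the same I/O complexity as matrix multiplication of the same dimensions, i.e.\ $O(\nu\cdot(mn + n^2/\sqrt H + mn\cdot n/(n\sqrt H)))$, which under the paper's standing assumption $\nu\le\gamma\sqrt H$ (so the $mn^2/\sqrt H$ term is absorbed into $\gamma\cdot mn^2$) collapses to $O(\nu\cdot mn)$.

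The main obstacle is the third step: controlling the vertical-communication cost of the recursive panel-apply updates so that it does not pick up a $\log$ factor. The honest resolution is the cache-oblivious accounting — recognize that the recursion on columns generates, at the level where subproblems first fit in cache, a disjoint partition of the columns of $A$, so the leaf costs telescope; and that the updates at shallower levels, being matrix multiplications whose larger dimension is $m$, each cost $O(\nu\cdot mn/2^{(\text{depth})})$ only because above the cutoff the panel being applied is itself too wide to stay resident, but the total over all such levels is still geometric. I would need to be a little careful that the trailing update $Q_1^T A_2$ at a node with $n'$ columns and $m'$ rows costs $O(\nu\cdot m'n')$ and not $O(\nu\cdot m'n'\cdot(\text{rank of }Q_1))$ — this is guaranteed by applying the WY block reflector via Rec-Mult rather than reflector-by-reflector, which is precisely why blocked QR is used. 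Everything else (the flop count, the base case, assembling the final $R$ and the implicit $Q$) is routine. Given the paper's conventions, the cleanest write-up is probably two sentences: cite a communication-optimal sequential QR, then invoke Lemma~\ref{lem:seqMM} and the assumption $\nu\le\gamma\sqrt H$ to drop the $1/\sqrt H$ term.
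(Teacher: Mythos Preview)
Your proposal is correct and, in its final recommendation, converges to exactly what the paper does: the paper states the lemma and then, in lieu of a proof, simply cites the sequential Communication-Avoiding QR (CAQR) algorithm of Demmel et al.\ as achieving the stated vertical communication cost, noting additionally that the Householder representation can be recovered at the same cost via Householder reconstruction. Your detailed recursive-column-splitting sketch is a sound expansion of why that citation suffices, but the paper itself offers none of this analysis and treats the result as known.
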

The sequential Communication-Avoiding QR (CAQR) algorithm achieves the vertical communication cost given above \cite{demmel:A206}.
The Householder representation, lower trapezoidal $m\times n$ matrix $U$ and upper-triangular $n\times n$ matrix $T$ so that $Q=I-UTU^T$, may be obtained with the cost of Lemma~\ref{lem:seqqr} using Householder reconstruction~\cite{BDGJNS_IPDPS_2014}.

We now consider parallel QR factorization, firstly for square matrices.
\begin{lemma}
\label{lem:parQRsq}
The QR factorization of an $n\times n$ matrix $A$ distributed in any load-balanced layout can be computed using $M=O\big(\frac{n^2}{p^{2(1-\delta)}}\big)$ memory for any $\delta\in[1/2,2/3]$ in BSP time,
{\thmsmall \begin{align*}
O\bigg(&\gamma\cdot \frac{n^3}{p}
            + \beta\cdot \frac{n^2}{p^\delta} 
            + \alpha\cdot p^\delta \bigg).
\end{align*}
}
\end{lemma}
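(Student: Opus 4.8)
The plan is to recursively reduce the square QR factorization to rectangular matrix multiplications (Lemma~\ref{lem:parMM}) and to tall-and-skinny QR factorizations, using the standard right-looking block algorithm with panel width $b$. First I would partition $A$ into panels of $b$ columns, so there are $n/b$ panels. At step $i$, the active trailing submatrix has dimensions roughly $(n-ib)\times(n-ib)$; we (1) compute the QR factorization of its leftmost $b$-column panel, obtaining the Householder representation $Q_i = I - U_iT_iU_i^T$ with $U_i$ lower trapezoidal of size $(n-ib)\times b$ and $T_i$ upper triangular $b\times b$, and (2) apply $Q_i^T$ to the remaining $(n-ib)\times(n-ib-b)$ block of the trailing submatrix by computing $(I - U_iT_i^TU_i^T)$ times that block, which is a sequence of three rectangular matrix multiplications ($W_i = U_i^T \cdot (\text{trailing block})$, then $T_i^T W_i$, then $U_i \cdot (T_i^T W_i)$, subtracted off).

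The cost accounting then goes as follows. For the panel factorizations, I would invoke a tall-and-skinny CAQR-style result on an $(n-ib)\times b$ matrix — this is where I need to be slightly careful, since the excerpt only states sequential QR (Lemma~\ref{lem:seqqr}) and square parallel QR, so I would cite the communication-avoiding tall-skinny QR of~\cite{demmel:A206} together with Householder reconstruction~\cite{BDGJNS_IPDPS_2014} to produce the $U_i, T_i$ representation; with $b = n/p^{\delta}$ and the panel laid out over all $p$ processors, each panel factorization costs $O(\gamma\cdot n b^2/p + \beta\cdot(\text{lower-order}) + \alpha\cdot \log p)$ and there are $p^\delta$ panels, giving computation $O(\gamma\cdot n^3/p)$ (summing $\sum_i (n-ib)b^2 = O(n^2 b) = O(n^3/p^\delta)$ work, divided by $p$) — wait, that needs $b^2 p^\delta \cdot n / p$, i.e. $n^3/(p^{1-\delta}p^\delta\cdot p^{\delta})$; with $b=n/p^{1-\delta}$ instead the panel work sums correctly, so I would choose $b=n/p^{1-\delta}=n/q$ to match the memory bound $M=O(n^2/p^{2(1-\delta)})$, which is exactly $b\times b$-block storage per processor. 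For the trailing updates, each of the $p^\delta$ steps performs three rectangular multiplies whose largest dimensions are $O(n)\times O(n)$ with inner dimension $b$; applying Lemma~\ref{lem:parMM} with $v=O(1)$ to each gives per-step cost $O(\gamma\cdot n^2 b/p + \beta\cdot(n^2/p + (n^2b/p)^{2/3}) + \alpha\log p)$, and summing over $p^\delta$ steps yields $O(\gamma\cdot n^3/p + \beta\cdot(n^2 p^\delta/p + \dots) + \alpha\cdot p^\delta\log p)$. I would then check that $n^2 p^{\delta}/p = n^2/p^{1-\delta} \le n^2/p^{\delta}$ only when $\delta\ge 1/2$, which holds, and that the $(n^2b/p)^{2/3}$ term summed over steps stays within $O(\beta\cdot n^2/p^\delta)$ for $\delta\in[1/2,2/3]$ — this is the delicate inequality to verify.

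The main obstacle is the bandwidth bookkeeping of the trailing-matrix updates: naively, each of the $p^\delta$ steps touches an $\Theta(n^2/p)$-word block, so the redistribution overhead inside Lemma~\ref{lem:parMM} could accumulate to $O(\beta\cdot n^2 p^\delta/p)$, which is only acceptable because $p^\delta/p = p^{-(1-\delta)} \le p^{-\delta}$ for $\delta\ge 1/2$. I would need to argue the data does not have to be globally redistributed every step — either by keeping the trailing matrix in a fixed block-cyclic layout compatible with the multiplication routine, or by observing that the cumulative redistribution cost telescopes because each matrix entry is updated only $O(p^\delta)$ times total across the whole factorization, contributing $O(\beta\cdot n^2 p^\delta/p)=O(\beta\cdot n^2/p^\delta)$ in aggregate. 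Managing this layout consistency — so that the output distribution of step $i$'s update is a valid input distribution for step $i+1$'s panel factorization and update — is the real content of the proof; the arithmetic and synchronization counts ($\alpha\cdot p^\delta$ from $p^\delta$ panels each costing $O(\log p)$... actually $O(p^\delta\log p)$, so I'd note the $\log p$ is absorbed or state the bound as $O(\alpha\cdot p^\delta)$ under the paper's convention of suppressing such factors) are then routine.
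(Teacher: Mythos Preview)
The paper does not prove this lemma from first principles; it simply cites Tiskin's square QR algorithm~\cite{Tiskin2007179} for the $F$, $W$, and $S$ bounds, and adds one sentence arguing that vertical communication is dominated by horizontal communication because every matrix operand participating in a local multiply or local QR is also communicated and $\nu\leq\beta$.

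Your right-looking block-Householder sketch is a genuinely different route and, as written, has a real gap for $\delta>1/2$. You correctly isolate the obstacle---the accumulated $O(n^2/p)$ redistribution term inside Lemma~\ref{lem:parMM} across the $n/b$ trailing updates---but your claimed resolution ``$p^{-(1-\delta)}\leq p^{-\delta}$ for $\delta\geq 1/2$'' is backwards: that inequality holds only for $\delta\leq 1/2$. With $b=n/p^\delta$ and hence $p^\delta$ steps, the accumulated redistribution cost is $\Theta(\beta\cdot n^2/p^{1-\delta})$, which \emph{exceeds} the target $\beta\cdot n^2/p^\delta$ whenever $\delta>1/2$. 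Switching to $b=n/p^{1-\delta}$, as you briefly contemplate, fixes that sum but makes the panel too wide: a tree-based tall-skinny QR on an $n\times(n/p^{1-\delta})$ panel communicates $\Omega(b^2)=\Omega(n^2/p^{2(1-\delta)})$ words per panel, and summing over $p^{1-\delta}$ panels again exceeds $n^2/p^\delta$ for $\delta>1/2$. Your suggested fixes do not close the gap either: a fixed 2D block-cyclic layout forgoes the 3D-style replication that is exactly what buys the $p^\delta$ bandwidth saving, and every trailing-matrix entry is genuinely touched at every one of the $n/b$ steps, so nothing telescopes. Tiskin's algorithm avoids this accumulation via a recursive slanted-panel scheme that is not equivalent to naive right-looking blocking; that nontriviality is precisely why the paper cites~\cite{Tiskin2007179} rather than reproducing the argument. (The stray $\log p$ factor on your synchronization bound is a separate, smaller discrepancy.)
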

The QR algorithm given by~\cite{Tiskin2007179} in the BSP model achieves the costs given in Lemma~\ref{lem:parQRsq}.
The vertical communication cost was not analyzed in~\cite{Tiskin2007179}.
However, the algorithm consists purely of distributed matrix multiplications or QR factorizations, which by Lemma~\ref{lem:seqMM} and Lemma~\ref{lem:seqqr} have a vertical communication cost proportional to the matrix sizes.
As the analysis in~\cite{Tiskin2007179} assumes all matrices that participate in multiplication or QR factorization are communicated, due to $\nu < \beta$, the horizontal communication cost dominates the vertical communication costs associated with these operations.

We now adapt the QR algorithm from~\cite{Tiskin2007179} to handle rectangular matrices with a desirable asymptotic cost (the embedding used in~\cite{Tiskin2007179} is inefficient for tall-and-skinny matrices).
Our adaptation is based on a binary QR reduction tree, with QR factorizations of nearly square matrices done at every node in the tree performed using the algorithm from~\cite{Tiskin2007179}.
An approach employing a QR reduction tree using Givens rotations goes back to~\cite{golub1986parallel}, a blocked flat tree approach (optimal sequentially) was presented in~\cite{Gunter:2005:POC:1055531.1055534}, and a parallel block reduction tree approach was given earlier in~\cite{da2002new}.
Our approach is closest to the TSQR algorithm~\cite{demmel:A206}, except a set of up to $q_\text{max}$ processors works on each tree node.

Algorithm~\ref{alg:rectQR} computes the QR factorization of an $\org{m}\times n$ matrix, outputting the first $n$ columns of the orthogonal $Q$ factor, as well as the $n\times n$ upper-triangular matrix $R$.
The algorithm assumes the existence of a sequential routine `QR' and a parallel routine for (nearly) square matrices `square-QR'.
\begin{theorem}
\label{thm:parQRrect}
Algorithm~\ref{alg:rectQR} can compute the QR factorization of any $\org{m}\times n$ matrix $A$ with $\org{m}\geq n$ in a load-balanced layout,
using $M=O\big(\big(\frac{n^\delta\org{m}^{1-\delta}}{p^{1-\delta}}\big)^2\big)$ memory for any $\delta\in[1/2,2/3]$, in BSP time, 
\thmsmall{
\begin{align*}
O\bigg(\gamma\cdot \frac{\org{m}n^2}{p}
            + \beta\cdot \left(\frac{\org{m}^\delta n^{2-\delta}}{p^\delta}+ \frac{\org{m}n}{p}\right)
            + \alpha\cdot \bigg(\frac{np}{\org{m}}\bigg)^\delta\log^2{p}
\bigg).
\end{align*}
}
\end{theorem}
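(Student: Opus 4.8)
The plan is to analyze Algorithm~\ref{alg:rectQR} as a binary reduction tree of depth $O(\log(pn/\org{m}))$ built on top of the tall-and-skinny structure, charging each level of the tree to an invocation of \texttt{square-QR} (Lemma~\ref{lem:parQRsq}) on nearly-square panels, and then summing the costs across levels as a geometric series. First I would set up the tree: partition the $\org{m}\times n$ matrix into $\org{m}/(cn)$ row blocks of size $cn\times n$ for a suitably chosen leaf dimension $c$ (so that the leaf is nearly square and $cn \times n$ QR fits the memory budget), assign up to $q_\text{max}=\Theta(\org{m}/(pn)\cdot p)$—more precisely a number of processors proportional to the ratio of the subproblem size to the total—to each leaf, do a local sequential \texttt{QR} (Lemma~\ref{lem:seqqr}) or a small parallel \texttt{square-QR} at each leaf, and then repeatedly pair up the surviving $n\times n$ triangular factors, stacking two $R$'s into a $2n\times n$ matrix and re-factoring with \texttt{square-QR}, doubling the processor count per node as we go up. I would maintain the invariant that at tree level $t$ (counting from the leaves) each active node has a $\Theta(n)\times n$ matrix handled by $\Theta(2^t \cdot p\org{m}/(pn\cdot \#\text{leaves}))$ processors, so that \texttt{square-QR} is always run in a load-balanced nearly-square regime where Lemma~\ref{lem:parQRsq} applies with its $\delta$ parameter.

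Next I would tally the four costs. For computation: the leaves cost $O(\gamma\cdot \org{m}n^2/p)$ in aggregate (this is work-optimal and dominates), and the internal nodes form a geometrically decreasing series in the number of nodes while each node's cost grows only linearly, so the total telescopes to $O(\gamma\cdot \org{m}n^2/p)$. For horizontal communication: the redistribution of the input into the leaf layout costs $O(\beta\cdot \org{m}n/p)$ (a load-balanced all-to-all, unit synchronization in BSP); each \texttt{square-QR} at level $t$ on a $\Theta(n)\times n$ matrix with $p_t$ processors costs $O(\beta\cdot n^2/p_t^\delta)$ by Lemma~\ref{lem:parQRsq}, and summing $\sum_t n^2/p_t^\delta$ over the tree—where $p_t$ roughly doubles each level and there are $O(\log(pn/\org{m}))$ levels—is dominated by the top of the tree, giving $O(\beta\cdot \org{m}^\delta n^{2-\delta}/p^\delta)$ after substituting $p_\text{top}=\Theta(p)$ adjusted by the leaf-count factor; combining yields the stated $O(\beta\cdot(\org{m}^\delta n^{2-\delta}/p^\delta + \org{m}n/p))$. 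For synchronization: each \texttt{square-QR} on $p_t$ processors costs $O(\alpha\cdot p_t^\delta)$ supersteps; with $p_t$ ranging up to $\Theta(p n/\org{m})$-ish at the root and the number of tree levels being $O(\log p)$, the dominant term is $O(\alpha\cdot (np/\org{m})^\delta)$, and an extra $\log^2 p$ factor appears from the $O(\log p)$ tree levels each potentially contributing an $O(\log p)$ factor inside \texttt{square-QR}'s own reduction structure. For vertical communication: by Lemma~\ref{lem:seqqr} each local QR moves words proportional to its matrix size, which is always dominated by the corresponding horizontal term since $\nu\leq\beta$, so it contributes nothing to the simplified bound. The memory bound $M=O((n^\delta\org{m}^{1-\delta}/p^{1-\delta})^2)$ I would get directly from the leaf block size $cn\times n$ with $c=\Theta((\org{m}/n)\cdot(n/\org{m})^\delta\cdot\ldots)$ chosen so that the per-processor share of a leaf matches the \texttt{square-QR} memory requirement $O((n')^2/p_t^{2(1-\delta)})$ from Lemma~\ref{lem:parQRsq}.

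The main obstacle I anticipate is the bookkeeping of the processor-allocation schedule across tree levels: one must choose the leaf dimension and the per-node processor counts so that (i) every \texttt{square-QR} invocation is genuinely in the nearly-square, load-balanced regime required by Lemma~\ref{lem:parQRsq}, (ii) the memory per processor never exceeds the claimed budget at any level (the leaves are the binding constraint, but one must check the internal nodes too since they use fewer processors on an $n\times n$ matrix), and (iii) the geometric sums over levels actually collapse to the top-of-tree term rather than accumulating a spurious $\log p$ on the communication volume. A secondary subtlety is handling the Householder representation: to output $Q$ (its first $n$ columns) rather than just $R$, the tree must be traversed a second time applying the stored orthogonal factors, and I would argue this back-application costs no more asymptotically than the forward pass—again because each node's apply step is a constant number of matrix multiplications on $\Theta(n)\times n$ blocks, covered by Lemma~\ref{lem:parMM} with the same per-level costs. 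Finally, I would note the degenerate cases ($\org{m}=\Theta(n)$, where the tree has $O(1)$ levels and the bound reduces to Lemma~\ref{lem:parQRsq}, and $\org{m}$ so large that $np/\org{m}<1$, where a single node with all $p$ processors running the tall-skinny reduction suffices) to confirm the formula behaves correctly at the endpoints.
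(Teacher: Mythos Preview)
Your overall architecture---a binary reduction tree whose internal nodes call \texttt{square-QR} on $\Theta(n)\times n$ panels, with costs summed as geometric series across $O(\log p)$ levels---matches the paper's strategy. The gap is in the synchronization analysis, and it is not a bookkeeping slip but a missing idea.

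You assert that the processor count per node ``rang[es] up to $\Theta(pn/\org{m})$-ish at the root.'' This is false under the natural allocation you describe (doubling processors as you ascend). If the leaves partition all $p$ processors, then at the root a single $2n\times n$ panel has all $p$ processors available, and \texttt{square-QR} there costs $\Theta(\alpha\cdot p^\delta)$ by Lemma~\ref{lem:parQRsq}---already exceeding the claimed bound whenever $\org{m}\gg n$. Conversely, if you cap the root at $pn/\org{m}$ processors to get the right synchronization term, the computation at each capped level becomes $n^3/(pn/\org{m})=\org{m}n^2/p$, and over $\Theta(\log p)$ capped levels this blows up the flop count by a $\log p$ factor. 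The paper resolves this tension by choosing the cap as
\[
q_{\max}=\frac{pn}{\org{m}}\,(\log p)^{1/\delta},
\]
which is precisely large enough that the per-level capped computation $n^3/q_{\max}=\org{m}n^2/(p(\log p)^{1/\delta})$ summed over $\log p$ levels stays $O(\org{m}n^2/p)$ (since $1/\delta\geq 3/2$), while small enough that the per-level synchronization $q_{\max}^\delta=(pn/\org{m})^\delta\log p$ summed over $\log p$ levels gives exactly the $(np/\org{m})^\delta\log^2 p$ in the theorem. Your explanation that the $\log^2 p$ arises from ``an $O(\log p)$ factor inside \texttt{square-QR}'s own reduction structure'' is incorrect: Lemma~\ref{lem:parQRsq} has no logarithmic factor in its synchronization cost. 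Both logarithms come from the choice of $q_{\max}$ (one from $q_{\max}^\delta$, one from the tree depth).

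A smaller point: your horizontal-communication sum $\sum_t n^2/p_t^\delta$ is a \emph{decreasing} geometric series in $t$ (since $p_t$ grows), so it is dominated by the \emph{leaves} (where $p_t\approx 2pn/\org{m}$), not the top. That substitution is what produces $\org{m}^\delta n^{2-\delta}/p^\delta$; your stated reasoning (``dominated by the top of the tree'' with $p_{\text{top}}=\Theta(p)$) would give $n^2/p^\delta$, which is the wrong term.
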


\begin{proof}
We assume without loss of generality that $\org{m}/n$ and $\org{p}$ are powers of two.
Let $T(\rec{m})$ be the cost of Algorithm~\ref{alg:rectQR} for an $\rec{m}\times n$ matrix using $p$ processors.
Note that $\org{m}$ corresponds to the number of rows in the original input matrix, while $\rec{m}$ will be used to refer to the number of rows at a given recursive step.
We select the maximum number of processors to be used in base-case square QR factorizations to be $q_\text{max}=\frac{\org{p}\org{n}}{\org{m}}\log(\org{p})^{1/\delta}$, in order to minimize synchronization cost while achieving an optimal horizontal communication cost.

The cost of the sequential base case of Algorithm~\ref{alg:rectQR} is, by Lemma~\ref{lem:seqqr},
\(T_\text{bs1}(\rec{m})=O( \gamma\cdot \rec{m}n^2 + \nu \cdot \rec{m}n).\)
When reaching the square base case (dimension $2n\times n$, since $\org{m}/n$ is a power of two), we employ the square QR algorithm~\cite{Tiskin2007179} 
with up to $q_\text{max}=\frac{\org{p}n}{\org{m}}\log(\org{p})^{1/\delta}$ processors.
We can bound the cost of this QR is by Lemma~\ref{lem:parQRsq}.
We break the cost into two cases: $T_\text{bp}(\rec{p})=T_\text{bp1}(\rec{p})$ when $\rec{p}<q_\text{max}$ and
$T_\text{bp}(\rec{p})=T_\text{bp2}$ when $\rec{p}\geq q_\text{max}$, where
{\mathsmall
\begin{align*}
&T_\text{bp1}(\rec{p})=O( \gamma\cdot n^3/\rec{p} + \beta \cdot n^2/\rec{p}^\delta + \alpha\cdot \rec{p}^\delta), \\
&T_\text{bp2}=O\bigg( \gamma\cdot \frac{\org{m}n^2}{\org{p}\log(\org{p})^{1/\delta}} 
+ \beta \cdot \frac{\org{m}^\delta n^{2-\delta}}{\org{p}^\delta \log \org{p}} 
+\alpha\cdot \Big(\frac{n\org{p}}{\org{m}}\Big)^\delta\log \org{p}\bigg).
\end{align*}
}
The square QR algorithm requires that the matrix be embedded into a slanted panel~\cite{Tiskin2007179}.
This can be done generally by using a somewhat larger matrix, but in all except the first recursive call, the $2b\times b$ matrix will have the structure of two stacked upper-triangular matrices.
The rows of these upper-triangular matrices can be interleaved to produce a slanted panel without embedding into a larger matrix.

\begin{algorithm}[t]
\caption {$[Q,R]\gets \text{rect-QR}(A,\Pi)$}
\label{alg:rectQR}
{\algsmall
\begin{algorithmic}[1]
\Require{Given positive integers $p,m,n,q_\text{max}$ and $\delta\in[1/2,2/3]$: $\Pi$ is a set of $p$ processors, $A$ is $m\times n$, $m/n$ and $p$ are powers of two, and each $\Pi[i]$ owns $mn/p$ elements of $A$.}
  \If {$p=1$} Compute $[Q,R]=\text{QR}(A)$ sequentially and exit. \EndIf
  \If {$m\leq 2n$} Compute $[Q,R]=\text{square-QR}(A,\Pi[1:\min(p,q_\text{max})])$ and exit. \EndIf
  \State Let $r=\min(p,\lceil \frac{m}{2n}\rceil )$ and partition $A=\begin{bmatrix} A^T_1 & \cdots & A^T_r \end{bmatrix}^T$ so that each $A_i$ is $m/r\times n$
  \Comment {Execute loop iterations in parallel}
  \For {$i \in \inti 1r$}
    \State $[W_i,R_i]=\text{rect-QR}(A_i,\Pi[(i{-}1)(p/r){+}1 \TO i(p/r)])$ \label{li:recSQ}
  \EndFor
  \State $[Z,R]=\text{rect-QR}\lt(\begin{bmatrix} R^T_1 & \cdots & R^T_r \end{bmatrix}^T,\Pi\rt)$ \label{li:recmain}
  \State Partition $Z=\begin{bmatrix} Z^T_1 & \cdots & Z^T_r \end{bmatrix}^T$ so that each $Z_i$ is $n\times n$
  \Comment {Execute loop iterations in parallel}
  \For {$i \in \inti 1r$}
    \State Compute $Q_i=W_iZ_i$ using $\Pi[(i{-}1)(p/r)+1 \TO i(p/r)]$ \label{li:recQRMM}
  \EndFor
  \Require{$A=Q\cdot R$ where $Q=\begin{bmatrix} Q^T_1 & \cdots & Q^T_r \end{bmatrix}^T$ is $m\times n$ with orthogonal columns, $R$ is $n\times n$ and upper-triangular, both are distributed in load balanced layouts across $\Pi$.}
\end{algorithmic}
}
\end{algorithm}

The recursive calls on line~\ref{li:recSQ} always immediately encounter one of the base-cases.
The only time base cases can have a matrix with dimension other than $2n\times n$ is during the invocations on line~\ref{li:recSQ} at the first recursive step of the algorithm, and only when $\org{m}>2n\org{p}$.
Therefore, we consider this first recursive step of Algorithm~\ref{alg:rectQR} separately. The cost of the first recursive step, when $\org{m}>2n\org{p}$, includes 
\begin{itemize}
\item the cost of a potential redistribution, $O(\beta\cdot \org{m}n/\org{p}+\alpha)$,
\item the cost of the invocations on line~\ref{li:recSQ} (which lead to base cases), $T_\text{bs1}(\org{m}/\org{p})$, since $r=\min(\org{p},\lceil \org{m}/2n\rceil)=\org{p}$,
\item the cost of the matrix multiplications on line~\ref{li:recQRMM}, which are done concurrently, each by a single processor, is
$O(\gamma \cdot \org{m}n^2/\org{p} + \nu\cdot \org{m}n/\org{p})$.
\end{itemize}
We can therefore bound the total BSP time of the algorithm for $\org m>2n\org p$ by
{\mathsmall
\begin{align*}
T(\org{m})  &= T(n\org{p}) +T_\text{bs1}\Big(\frac{\org{m}}{\org{p}}\Big) 
+ O\Big(\gamma \cdot \frac{\org{m}n^2}{\org{p}} + \beta\cdot \frac{\org{m}n}{\org{p}}\Big) \\
&= T(n\org{p}) +O(\gamma \cdot \org{m}n^2/\org{p} + 
\beta\cdot \org{m}n/\org{p}+\alpha). \end{align*}
}
We note that the cost of this first recursive step for $\org{m}>2n\org{p}$ is no greater than the cost postulated in the theorem.
We now focus on subsequent recursive calls into line~\ref{li:recQRMM} or the case when $\org{m}\leq 2n\org{p}$, the matrix multiplications done on line~\ref{li:recQRMM} involve matrices of size at most $2n\times n$, each executed using $\org{p}n/\rec{m}$ processors.
By Lemma~\ref{lem:parMM} with $v=(\org{p}n/\rec{m})^{2-3\delta}$, these matrix multiplications (done concurrently) take time, $T_\text{MM}(\rec{m})=$
{\mathsmall \[O\bigg(\gamma \cdot \frac{\rec{m}n^2}{\org{p}} + \beta\cdot \bigg(\frac{\rec{m}n}{\org{p}} + \frac{\rec{m}^\delta n^{2-\delta}}{\org{p}^{\delta}}\bigg)  
+ \alpha\cdot \bigg(\frac{\org{p}n}{\rec{m}}\bigg)^{2-3\delta}\log{\org{p}}\bigg)\] }
and use $M=O\big(\big(\frac{n^\delta\rec{m}^{1-\delta}}{\org{p}^{1-\delta}}\log{\org{p}}\big)^2\big)$ memory.
When combined with the concurrent recursive calls on line~\ref{li:recQRMM} on matrices of size $2n\times n$ with $\org{p}n/\rec{m}$ processors and the recursive call on line~\ref{li:recmain} on a matrix of size $\rec{m}/2\times n$ with all $\org{p}$ processors, we obtain the following BSP time recurrence for $\rec m\leq 2n\org p$,
{\mathsmall
\begin{align*}
T(\rec{m}) =& T(\rec{m}/2) +T_\text{bp}(\org{p}n/\rec{m}) + T_\text{MM}(\rec{m}),
\end{align*}
}
where $T_b(\org{p}n/\rec{m})$ is a base case where up to $q_\text{max}$ processors perform the QR.
We consider the two cases (for $\rec m\leq 2n\org p$),
{\mathsmall
\begin{align*}
T(\rec{m}) =& T(\rec{m}/2) + T_\text{MM}(\rec{m})
+
\begin{cases}
T_\text{bp1}(\org{p}n/\rec{m}) &: \org{p}n/\rec{m} < q_\text{max} \\
T_\text{bp2}  &: \org{p}n/\rec{m} \geq q_\text{max}  
\end{cases}
\end{align*}
}
Since $q_\text{max}=\frac{\org{p}n}{\org{m}}\log(\org{p})^{1/\delta}$, and $\rec{m}$ decreases by a factor of two at each step, up to the first $(1/\delta)\log \log p$ recursive steps make the call on line~\ref{li:recSQ} with more than $q_\text{max}$ processors.
The computation and communication cost of these calls are no greater than that of matrix multiplication (part of $T_\text{MM}(\rec{m})$), while the synchronization cost increases geometrically, going up to the latency cost in $T_\text{bp2}$.
Therefore, the recurrence is asymptotically equivalent to (for $\rec m\leq2n\org p$),
{\mathsmall
\begin{align*}
T&(\rec{m}) = T(\rec{m}/2) + T_\text{MM}(\rec{m}) + T_\text{bp2} \\
=&T(\rec{m}/2) + O\bigg(\gamma\cdot \bigg(\frac{\rec{m}n^2}{\org{p}}+\frac{\org{m}n^2}{\org{p}\log(\org{p})^{1/\delta}}\bigg) \\
&+ \beta\cdot \bigg(\frac{\rec{m}n}{\org{p}} + \frac{\rec{m}^\delta n^{2-\delta}}{\org{p}^{\delta}}+ \frac{\org{m}^\delta n^{2-\delta}}{\org{p}^\delta \log \org{p}}\bigg) 
+ \alpha\cdot \bigg(\frac{\org{p}n}{\org{m}}\bigg)^{\delta}\log p\bigg). 
\end{align*}
}
Since, $\rec{m}\leq 2n\org p$, one of the base-cases is reached after $\log p$ steps, and so the above time reduces to the one postulated in the theorem.
\end{proof}
Alternate communication-efficient formulations of a rectangular QR algorithm are also possible (for instance by combining column-recursion~\cite{EG98} with communication-efficient matrix multiplication, see~\cite{ES_dissertation_2014}).
We would like to work with the Householder representation to apply orthogonal transformations efficiently in our symmetric eigensolver algorithms, so we give the following corollary.
\begin{corollary}
\label{cor:parQRrectY}
The Householder representation of the $\org{m}\times n$ orthogonal matrix $Q$ computed by Algorithm~\ref{alg:rectQR}, $Q=(I-UTU_1^T)$, where $U_1$ is the lower triangular top $n\times n$ block of $U$, while $T$ is upper-triangular and $U^TU=T^{-1}+T^{-T}$, can be obtained with no greater asymptotic cost or memory than given in Theorem~\ref{thm:parQRrect}.
\end{corollary}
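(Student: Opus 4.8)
The plan is to compute $Q$ and $R$ with Algorithm~\ref{alg:rectQR} at the cost of Theorem~\ref{thm:parQRrect}, and then to recover $U$ and $T$ by a parallel execution of Householder reconstruction~\cite{BDGJNS_IPDPS_2014} applied to $(Q,R)$; this post-processing route is the natural one, since neither the binary reduction tree of Algorithm~\ref{alg:rectQR} nor the square-QR algorithm of~\cite{Tiskin2007179} maintains a compact Householder representation. Partition the computed $Q$ as $\begin{bmatrix}Q_1^T & Q_2^T\end{bmatrix}^T$ with $Q_1$ its leading $n\times n$ block, and $U=\begin{bmatrix}U_1^T & U_2^T\end{bmatrix}^T$ correspondingly. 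From $Q=I_{m,n}-UTU_1^T$ one reads off $U_1 S = I_n - Q_1$ and $U_2 S = -Q_2$, where $S:=TU_1^T$ is upper triangular because $T$ is upper triangular and $U_1$ is unit lower triangular. Thus reconstruction consists of: (i) an $n\times n$ LU-type factorization of a sign-corrected copy of $I_n-Q_1$, yielding $U_1$ and $S$; (ii) a triangular solve $U_2 S = -Q_2$ for the $(m-n)\times n$ block $U_2$, i.e.\ an $(m-n)\times n$ by $n\times n$ matrix multiplication against $Q_2$; and (iii) $O(n^3)$ further work to form $T=S(U_1^{-1})^T$. The consistency relation $U^TU=T^{-1}+T^{-T}$ requires no extra work: it is precisely the identity forced by $Q^TQ=I$ together with $U$ having full column rank.

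For the cost, first redistribute $Q$ at cost $O(\beta\cdot mn/p+\alpha)$ so that $Q_1$ lies on a chosen subset of processors and $Q_2$ is load balanced. Run step (i), and the $n\times n$ work of step (iii), on $\min\big(p,(pn/m)\log(p)^{1/\delta}\big)$ processors --- exactly the processor count $q_\text{max}$ used for the square-QR base cases in the proof of Theorem~\ref{thm:parQRrect}; by Lemma~\ref{lem:parQRsq} (or Lemma~\ref{lem:seqqr} when this count is one) their cost is then bounded by a single base-case term, which already lies inside the bound of Theorem~\ref{thm:parQRrect}. Run step (ii) exactly as the matrix multiplications on line~\ref{li:recQRMM} of Algorithm~\ref{alg:rectQR} are run in that proof: when $m>2np$, by single-processor multiplications (Lemma~\ref{lem:seqMM}), after broadcasting $S^{-1}$ at cost $O(\beta\cdot n^2+\alpha)=O(\beta\cdot mn/p+\alpha)$; and when $m\le 2np$, by Lemma~\ref{lem:parMM} on all $p$ processors with a suitable choice of the parameter $v$. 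In both regimes the cost of step (ii) is dominated by the corresponding matrix-multiplication term appearing in the proof of Theorem~\ref{thm:parQRrect}, and its $n\times n$ and $(m-n)\times n$ work arrays fit within $M=O\big((n^\delta m^{1-\delta}/p^{1-\delta})^2\big)$ because $n\le m$. Summing, reconstruction adds at most a constant factor to the $\gamma$-, $\beta$-, $\nu$-, and $\alpha$-terms of Theorem~\ref{thm:parQRrect}.

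The main obstacle is controlling the synchronization cost: carrying out the $n\times n$ factorization of step (i) on all $p$ processors would cost $\Theta(\alpha\cdot p^\delta)$ supersteps, exceeding the permitted $O\big(\alpha\cdot(np/m)^\delta\log^2 p\big)$ whenever $m\gg np$, so one must restrict to $q_\text{max}$ processors and then re-check, by the same case split on $m$ versus $np$ used in the proof of Theorem~\ref{thm:parQRrect}, that the flop and bandwidth terms of steps (i)--(iii) also remain within budget; the identity $n^2/q_\text{max}^{2(1-\delta)}=M$ and the matching of $v$ to the line~\ref{li:recQRMM} multiplications make these checks routine. A secondary point, which is supplied by~\cite{BDGJNS_IPDPS_2014} rather than here, is that step (i) is well defined only after a suitable choice of signs for the columns of $Q$ (equivalently for the diagonal of $R$); this choice affects none of the cost estimates above.
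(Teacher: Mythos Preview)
Your proposal is correct and follows essentially the same route as the paper: both post-process the computed $Q$ via the Householder reconstruction of~\cite{BDGJNS_IPDPS_2014}, factoring the sign-corrected top $n\times n$ block on $\Theta(pn/m)$ processors to keep synchronization within budget, then recovering the remaining rows of $U$ and the matrix $T$ by matrix multiplications bounded via Lemma~\ref{lem:parMM}. The paper differs only cosmetically---it invokes Tiskin's LU algorithm~\cite{Tiskin_LU_2002} directly (which conveniently also outputs the triangular inverses you need in step~(iii)) rather than using Lemma~\ref{lem:parQRsq} as a proxy for the LU cost, and it forms $U=QW_1^{-1}$ as a single $m\times n$ by $n\times n$ product on all $p$ processors without your case split on $m$ versus $2np$.
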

\begin{proof}
The Householder representation $U,T$ can be obtained stably by executing $[U_1,W_1]=\text{LU}(Q_1-S)$ where $Q_1$ is the top $n\times n$ block of $Q$ and $S$ is a diagonal sign matrix, then computing $U=QW_1^{-1}$ and $T=W_1U_1^{-T}$~\cite{BDGJNS_IPDPS_2014}.
The matrices $U_1$, $W_1$, $U_1^{-1}$, and $W_1^{-1}$ can be obtained by a parallel non-pivoted LU factorization algorithm augmented to subtract $S$ as in~\cite{BDGJNS_IPDPS_2014}, which makes the matrix diagonally dominant.
The LU algorithms in~\cite{Tiskin_LU_2002} and~\cite{SD_EUROPAR_2011} would both obtain the desired costs, but the former is slightly more convenient for our analysis.

When executed using $pn/\org{m}$ processors, the algorithm in~\cite{Tiskin_LU_2002} takes BSP time, $O(\gamma \cdot \org{m}n^2/p + \beta \cdot \org{m}^\delta n^{2-\delta}/p^\delta + \alpha \cdot (np/\org{m})^\delta)$.
This cost was presented in~\cite{Tiskin_LU_2002}, modulo analysis of vertical communication cost, but as the algorithm is based purely on parallel multiplication of square matrices, the vertical communication cost is dominated by the horizontal communication cost.
The algorithm also outputs the inverses of the triangular factors~\cite{Tiskin_LU_2002}, so
matrix multiplications suffice to compute $U=QW_1^{-1}$ and $T=W_1U_1^{-T}$.
These can be done using all the processors in time, $O(\gamma \cdot \org{m}n^2/p + \beta \cdot \org{m}^\delta n^{2-\delta}/p^\delta + \alpha)$ with $M=O\big(\big(\frac{n^\delta\org{m}^{1-\delta}}{p^{1-\delta}}\big)^2\big)$ memory.
As these costs and memory usage are no greater than in Theorem~\ref{thm:parQRrect}, we arrive at the postulated conclusion.
\end{proof}

\section{Symmetric Eigensolvers}
\label{sec:se}

Algorithms for blocked computation of the eigenvalue decomposition of a symmetric matrix via
a tridiagonal matrix were studied by~\cite{dongarra1989block,Dongarra1992973,Joffrain:2006:AHT:1141885.1141886}. 
These algorithms reduce an $n\times n$ symmetric matrix $A$ to a matrix $B$ with band-width $b$ and the same eigenvalues as $A$ via a series
of $k=(n-b)/b$ orthogonal transformations,
{\anlsmall $$B=Q_1^T\cdots Q_k^TBQ_k\cdots Q_1,$$}
where each $Q_i$ is representable in terms of $b$ Householder vectors, aggregated in a trapezoidal matrix $U_i$, as $Q_i=(I-U_iT_iU_i^T)$. 

A key property employed by these algorithms is that each two-sided trailing matrix update of blocked Householder
transformations may be done as a rank-$2b$ symmetric update.
To compute the two-sided transformation $Q^TXQ$ where $X=X^T$ and $Q=(I-UTU^T)$, we can write
{\anlsmall 
\begin{align}
Q^TXQ=& (I-UT^TU^T)X(I-UTU^T) \nonumber \\
=& X+UV^T+VU^T \label{eq:auvvu},
\end{align}
}
where $V= \frac{1}{2}UT^TU^TXUT-XUT$. This form of the update is cheaper to compute than the explicit two-sided update and is easy to aggregate by appending
additional vectors to $U$ (to aggregate the Householder form itself requires computing a larger $T$ matrix). 
Since the trailing matrix update does not have to be applied immediately, but only
to the columns which are factorized, this two-sided update can also be aggregated and used in a left-looking algorithm. 
For instance, to multiply $Q^TXQ$ by a matrix $Y$, we can compute 
{\anlsmall
\begin{align}
Q^TXQY=XY+UV^TY+VU^TY. \label{eq:fctrup}
\end{align}
}

Returning to algorithms that compute a series of $k$ two-sided transformations, we note that when computing $V_2$ from $U_2$ (to apply $Q_2$), we need to multiply $U_2$ by a submatrix of $Q_1^TAQ_1$, which can be done without applying $Q_1$, using the above form.
Left-looking algorithms which generalize this idea and employ a delayed trailing matrix update have been used to reduce directly to tridiagonal form ($b=1$)~\cite{dongarra1989block}.

However, there are disadvantages to reducing the symmetric matrix directly to tridiagonal form, since
it requires that a vector be multiplied by the trailing matrix for each computation of $V_i$ of
which there are $n-2$. These matrix-vector multiplications require $O(n)$ synchronizations 
and $O(n)$ transfers of the trailing matrix between memory and cache (so long as it does not fit into cache).
These disadvantages motivated approaches where the matrix is not reduced
directly to tridiagonal form, but rather to banded form, which allows for $b>1$ Householder vectors
to be computed via QR at each step without needing to touch the trailing matrix from within the QR.
After such a reduction to banded form, it is then necessary to reduce the banded matrix to tridiagonal form.
However, this can be significantly less expensive because the trailing matrix is banded and requires less work and vertical communication to update
than during the full-to-banded reduction step.

Such a multi-stage reduction approach was introduced by~\cite{SBR1,SBR2} with the aim of achieving BLAS 3 reuse. 
These algorithms can reduce the banded matrix to tridiagonal or perform more stages of reduction, employing multiple intermediate band-widths. 
Performing more stages of successive band reduction can improve the synchronization cost of the overall approach, from $O(n)$ as needed if reducing to tridiagonal form directly, to $O(\sqrt{p})$ as shown by~\cite{BDK13-TR}.
ELPA~\cite{Auckenthaler2011272} is a distributed-memory library implementing a two-step reduction approach, motivated by reducing vertical communication cost.
ELPA employs the parallel banded-to-tridiagonal algorithm introduced by~\cite{doi:10.1137/0914078}. 
Performance studies by~\cite{Auckenthaler2011272} have demonstrated that this approach is particularly beneficial for large matrices.

We first introduce an algorithm for reducing a full dense matrix to banded form, with up to $O(p^{1/6})$ less horizontal communication than previously known schemes.
We subsequently introduce an algorithm for reducing a banded matrix to a smaller band-width, again with less communication than known approaches.
Both of these reduction algorithms use a parallel routine `QR', which performs QR factorization and outputs the Householder representation ($U$, $T$) of the $Q$ factor.
We then give a combined, 2.5D symmetric eigensolver algorithm, that uses the first algorithm to reduce the dense symmetric matrix to band-width $\frac{n}{\max(p^{2-3\delta},\log p)}$, then uses $O(\log p)$ calls to our band-to-band reduction, to arrive at a band-width of $n/p$, which is small enough to allow for efficient sequential computation of eigenvalues.
The resulting symmetric eigensolver has the same BSP complexity as QR factorization (Lemma~\ref{lem:parQRsq}), modulo logarithmic factors in the number of processors for the vertical communication and synchronization costs.

  \subsection{Full-to-Band Reduction}
  \label{sec:se:symdirect}

Algorithm~\ref{alg:parDRSE} reduces a symmetric $n$-by-$n$ matrix $A$
to band-width $b$ using replication of data  and aggregation.
It achieves a horizontal communication cost of
$W=O(n^2/p^{\delta}),$
when the amount of available memory on each processor is 
$M = O(n^2/p^{2(1-\delta)}).$
The algorithm is left looking, meaning it updates the next matrix panel (line~\ref{li:parDRSE:upd}) immediately prior to performing the QR of the panel.
Figure~\ref{fig:se_uv} displays the key matrices employed in Algorithm~\ref{alg:parDRSE}, specifically the third and fourth steps of recursion.

The algorithm replicates the matrix $A$ and aggregates as well as replicates the updates $U^{(0)}$ and $V^{(0)}$ (these update matrices should have $m=0$ columns for the initial invocation of Algorithm~\ref{alg:parDRSE}) over $c=p^{2\delta -1}$ layers of $q^2=p^{2(1-\delta)}$ processors. 
In the definition of the algorithm and the analysis we assume that $c$ and $q$ are integers for any given $p$. 
Each of these replicated matrices is stored in a 2D cyclic distribution on each processor grid layer, adhering to the layout assumptions of Algorithm~\ref{alg:StMM}.
A cyclic layout yields local blocks which can be used within sequential routines the same way as done in a blocked layout.
The assumption $b\bmod q=1$ ensures that whenever each new panel of $U$ and $V$ is replicated ($U_1$ and $V_1$ on line \ref{li:UVrep}), they can be concatenated to previously replicated panels while maintaining a perfectly load balanced cyclic distribution.

Algorithm~\ref{alg:parDRSE} performs the update correctly since, first, the computation of $W=\bar{A}U$ where $\bar{A}=Q^TAQ$ (line~\ref{li:parDRSE:formW}) follows the identity Eqn.~\eqref{eq:fctrup}.
Further, as computed on line~\ref{li:parDRSE:formV}, $V$ takes the desired form,
{\anlsmall $$V=\left(\frac{1}{2}UT^TU^T-I\right)WT=\frac{1}{2}UT^TU^T\bar{A}UT-\bar{A}UT,$$}
the same one as the aggregated update matrix derived in Eqn.~\eqref{eq:auvvu}. Consequently, the eigenvalues of the original matrix are preserved in the resulting banded matrix due to the ensured condition on the result of the tail recursion,
which performs the update and factorization of the trailing matrix. In the base case, the matrix
dimension is less than or equal to the desired matrix band-width, which means it suffices to perform
the aggregated update and return the result, which would appear in the lower right block of the
full banded matrix. We now analyze the execution time of Algorithm~\ref{alg:parDRSE} in the BSP cost model. 

\begin{lemma}
\label{lem:parsymdir}
Algorithm~\ref{alg:parDRSE} can reduce any symmetric $n$-by-$n$ matrix (input in any evenly-distributed layout and with $n\geq p$) to a banded matrix with the same eigenvalues and any band-width $n/p^\delta\leq b \leq n/\log p$, using $M=O(n^2/p^{2(1-\delta)})$ memory for any $\delta\in[1/2,2/3]$, when $\cachesize > 3n^2/p^{2(1-\delta)}$, in BSP time,
{\thmsmall
\begin{align*}
O\bigg(&\gamma\cdot \frac{n^3}{p} + \beta\cdot \frac{n^2}{p^\delta} + \alpha\cdot p^\delta\log^2 p \bigg). \end{align*}
}
If $\cachesize \leq 3n^2/p^{2(1-\delta)}$, then there is an additional vertical communication cost of $O(\nu\cdot(n/b)n^2/p^{2(1-\delta)})$.
\end{lemma}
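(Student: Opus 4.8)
The plan is to analyze Algorithm~\ref{alg:parDRSE} as a recursion over the matrix dimension $n$, identifying the per-step cost as dominated by three subroutines: (i) a left-looking update of the current panel via the aggregated $U^{(0)}, V^{(0)}$ matrices, which is a multiplication of a pre-replicated operand (the matrix $A$) against the thin panel, handled by Algorithm~\ref{alg:StMM} (Lemma~\ref{lem:repupd}); (ii) a parallel QR factorization of the panel with Householder reconstruction, handled by Algorithm~\ref{alg:rectQR} and Corollary~\ref{cor:parQRrectY}; and (iii) the formation of $W=\bar AU$ and $V$ (lines \ref{li:parDRSE:formW}–\ref{li:parDRSE:formV}), again matrix multiplications — the $W=\bar AU$ step being another pre-replicated multiply handled by Lemma~\ref{lem:repupd}, and the small $T$-based combinations handled by Lemma~\ref{lem:parMM}. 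I would first fix the blocking: the algorithm processes the matrix in panels of width $\Theta(n/p^\delta)$ (or, near the base case, width $\Theta(b)$), so there are $\Theta(p^\delta)$ panels, or more precisely the recursion on dimension halts when the remaining dimension drops to $b$. I would then write the recurrence $T(n') = T(n'-\text{panelwidth}) + (\text{cost of update} + \text{cost of QR} + \text{cost of forming } W,V)$ and sum it.

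Next I would bound each piece. For the pre-replicated multiplies, I invoke Lemma~\ref{lem:repupd} with the panel as the ``$B$'' operand of dimensions $n\times \Theta(n/p^\delta)$ and the replicated $A$ as ``$A$'' of dimensions $n\times n$ stored on $c=p^{2\delta-1}$ layers of $q^2=p^{2(1-\delta)}$ processors, using a choice of $w$ that makes the synchronization contribution $O(1)$ per step while keeping the memory within $M=O(n^2/p^{2(1-\delta)})$; this yields a horizontal cost $O(\beta\cdot n^2/p^\delta)$ per step before summing, but summed over $\Theta(p^\delta)$ panels the total update cost is dominated by the arithmetic, and the horizontal communication telescopes to $O(\beta\cdot n^2/p^\delta)$ overall because each entry of $A$ participates in updates whose combined column-count is $O(n)$. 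For the QR of a panel, Theorem~\ref{thm:parQRrect} / Corollary~\ref{cor:parQRrectY} applied to an $n'\times\Theta(n/p^\delta)$ matrix on all $p$ processors gives $O(\gamma\cdot n'(n/p^\delta)^2/p + \beta\cdot (n')^\delta (n/p^\delta)^{2-\delta}/p^\delta + \alpha\cdot(\ldots)^\delta\log^2 p)$; summing the arithmetic over panels gives $O(\gamma\cdot n^3/p)$, the communication gives $O(\beta\cdot n^2/p^\delta)$, and the synchronization, accumulated over $\Theta(p^\delta)$ panels each contributing an $O((\cdot)^\delta\log^2 p)$ term, gives the stated $O(\alpha\cdot p^\delta\log^2 p)$. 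I would verify that the memory bound $M=O(n^2/p^{2(1-\delta)})$ is respected by all three subroutines simultaneously (the constant $3$ in the cache hypothesis $\cachesize > 3n^2/p^{2(1-\delta)}$ comes from needing the replicated copies of $A$, $U^{(0)}$, and $V^{(0)}$ resident in cache so that Lemma~\ref{lem:repupd}'s fast-path applies).

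For the vertical communication, I would argue that whenever $\cachesize > 3n^2/p^{2(1-\delta)}$ the local copy of $A$ (and the aggregated $U^{(0)},V^{(0)}$) fit in cache, so by the cache-resident case of Lemma~\ref{lem:repupd} the vertical traffic of each pre-replicated multiply is subsumed by its horizontal traffic and drops out under $\nu\le\beta$; likewise the QR and small multiplies have vertical cost proportional to their operand sizes, dominated by horizontal cost. When $\cachesize\le 3n^2/p^{2(1-\delta)}$, each of the $\Theta(n/b)$ recursive steps must re-stream the local share $O(n^2/p^{2(1-\delta)})$ of $A$ from memory, giving the stated additive $O(\nu\cdot(n/b)\,n^2/p^{2(1-\delta)})$.

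The main obstacle I anticipate is the synchronization bookkeeping: naively, each of $\Theta(p^\delta)$ panels triggers a QR whose own latency is $O((n'p/ n_{\text{panel}})^\delta\log^2 p)$-ish, and one must be careful that these do \emph{not} multiply to something like $p^{2\delta}$ but instead sum to $O(p^\delta\log^2 p)$ — this requires exploiting that early panels (large trailing dimension) use the QR with its full processor complement so the per-call latency is small, and that the $\log^2 p$ factor absorbs the geometric-sum slack, exactly as in the proof of Theorem~\ref{thm:parQRrect}. A secondary subtlety is ensuring the cyclic-layout / load-balance hypotheses of Lemma~\ref{lem:repupd} (the $b\bmod q = 1$ condition and the $O(m/p^{1-\delta})\times O(n/p^{1-\delta})$ block-shape requirement on $P^{(1)}AP^{(2)}$) are met at every recursive step after panels are peeled off and new $U_1,V_1$ are concatenated to the replicated aggregates; I would check that the stated divisibility assumptions make this automatic.
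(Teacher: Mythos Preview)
Your overall decomposition (per-step QR + replicated multiplications, summed over recursive steps) matches the paper, but two concrete misreadings of Algorithm~\ref{alg:parDRSE} lead to a genuine gap in the synchronization bound.

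First, the panel width is $b$, not $\Theta(n/p^\delta)$. The algorithm subdivides with $A_{11}$ of size $b\times b$ at every recursive step, so there are $n/b-1$ steps; since $b$ may be as large as $n/\log p$, this can be as few as $\Theta(\log p)$ steps, not $\Theta(p^\delta)$. Second, and more critically, the QR on line~\ref{li:parDRSE:tsqr} is \emph{not} done with all $p$ processors: the algorithm calls it on the subgrid $\Pi[\ALL,1{:}z,\ALL]$ of $zp^\delta$ processors with $z=(bp^\delta/n)^{(1-\delta)/\delta}$, and this choice is what makes the synchronization work. Plugging into Theorem~\ref{thm:parQRrect} with $\bar p=zp^\delta$ gives per-step latency $O((b/n)p^\delta\log^2 p)$, which sums over $n/b$ steps to the claimed $O(p^\delta\log^2 p)$.

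Your alternative---running each QR on all $p$ processors and hoping the per-step latencies sum geometrically---does not achieve the bound. With all $p$ processors the per-step QR latency is $O((bp/n)^\delta\log^2 p)$, and over $n/b$ steps this is $O\big((n/b)^{1-\delta}p^\delta\log^2 p\big)$; at $b=n/p^\delta$ this is $p^{\delta(2-\delta)}\log^2 p$, e.g.\ $p^{3/4}\log^2 p$ when $\delta=1/2$, strictly worse than $p^{1/2}\log^2 p$. Accounting for the shrinking trailing dimension $n'$ does not help: the resulting sum $\sum_j (p/j)^\delta$ over $j\le p^\delta$ still yields the same $p^{\delta(2-\delta)}$ exponent. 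So the reduced processor count in the paper's QR call is the missing ingredient, not a detail that can be absorbed into a geometric-sum slack.
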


\begin{algorithm}[t]
\caption {$[B]\gets \text{2.5D-Full-to-Band}(A,U^{(0)},V^{(0)},\Pi,b)$}
\label{alg:parDRSE}
{\algsmall
\begin{algorithmic}[1]
\Require {Given nonnegative integers $p,n,m,b$ and $\delta\in[1/2,2/3]$, $z=(bp^\delta/n)^{(1-\delta)/\delta}$: 
$\Pi$ is a grid of $q\times q\times c$ processors where $q=p^{1-\delta}$ and $c=p^{2\delta-1}$ and $b\bmod q=1$,
$A$ is an $n$-by-$n$ symmetric matrix,
$U^{(0)}$ and $V^{(0)}$ are $n$-by-$m$ matrices where $U^{(0)}$ is trapezoidal (zero in top right upper $b$-by-$b$ triangle) and $V^{(0)}$ is dense,
$A$ (stored as a nonsymmetric matrix), $U^{(0)}$, and $V^{(0)}$ are distributed cyclically over $\Pi[\ALL,\ALL,k]$ for each $k\in[1,c]$.}
  \If {$n \leq b$}
  \State Compute $B=A+U^{(0)} {V^{(0)}}^T  +V^{(0)} {U^{(0)}}^T$ and exit.
\vspace{.05cm}
  \EndIf
\State Subdivide $A=\begin{bmatrix} A_{11} & A_{21}^T \\ A_{21} & A_{22}\end{bmatrix}$ where $A_{11}$ is $b$-by-$b$
\State Subdivide $U^{(0)}=\begin{bmatrix} U^{(0)}_1 \\ U^{(0)}_2\end{bmatrix}$ and $V^{(0)}=\begin{bmatrix} V^{(0)}_1 \\ V^{(0)}_2\end{bmatrix}$ where $U^{(0)}_1$ and $V^{(0)}_1$ are $b$-by-$m$

\State Compute $\begin{bmatrix} \bar{A}_{11} \\ \bar{A}_{21} \end{bmatrix} 
    =  \begin{bmatrix} A_{11} \\ A_{21} \end{bmatrix} + U^{(0)}{V^{(0)}_1}^T  +V^{(0)} {U^{(0)}_1}^T$ 
  \label{li:parDRSE:upd}

\vspace{.05cm}
\Comment {Compute QR of matrix panel}
\State $[U_1,T,R] \gets \text{QR}(\bar{A}_{21},\Pi[\ALL,1\TO z,\ALL])$ \label{li:parDRSE:tsqr}
\vspace{.05cm}

\State Compute $W=A_{22}U_1+U_2^{(0)}({V_2^{(0)}}^TU_1)+V_2^{(0)}({U_2^{(0)}}^TU_1)$ \label{li:parDRSE:formW}
\vspace{.05cm}
\State Compute $V_1=\frac{1}{2}U_1(T^T(U^T(WT)))-WT$ \label{li:parDRSE:formV}
\State Replicate $U_1$ and $V_1$ so that they are distributed cyclically over $\Pi[\ALL,\ALL,k]$ for each $k\in[1,c]$ \label{li:UVrep}
  \Comment {Recursively reduce the trailing matrix to banded form}
\State $B_2=\text{2.5D-Full-to-Band}(A_{22},[U_2^{(0)},U_1],[V_2^{(0)},V_1],\Pi,b)$
\State {\small
\(B=\left[\begin{array}{
    >{\centering\arraybackslash$}m{0.3cm}<{$}|
    >{\centering\arraybackslash$}m{0.2cm}<{$}
    >{\centering\arraybackslash$}m{0.2cm}<{$}}
 \hspace{-0.1cm}     \bar{A}_{11} & R^T & 0  \\ \hline
 \hspace{-0.1cm}     R & \multicolumn{2}{>{\centering\arraybackslash$}m{0.6cm}<{$}}{\multirow{2}{*}{$B_2$}} \\ 
 \hspace{-0.1cm}     0 & & \end{array}\right]\)
}
\vspace{.05cm}

\Ensure { $B$ is a symmetric $n$-by-$n$ matrix with band-width $b$ and the same eigenvalues as $A+U^{(0)} {V^{(0)}}^T  +V^{(0)} {U^{(0)}}^T$.
}
\end{algorithmic}
}
\end{algorithm}
\begin{figure}[t] 
\centering
\includegraphics[width=3.in]{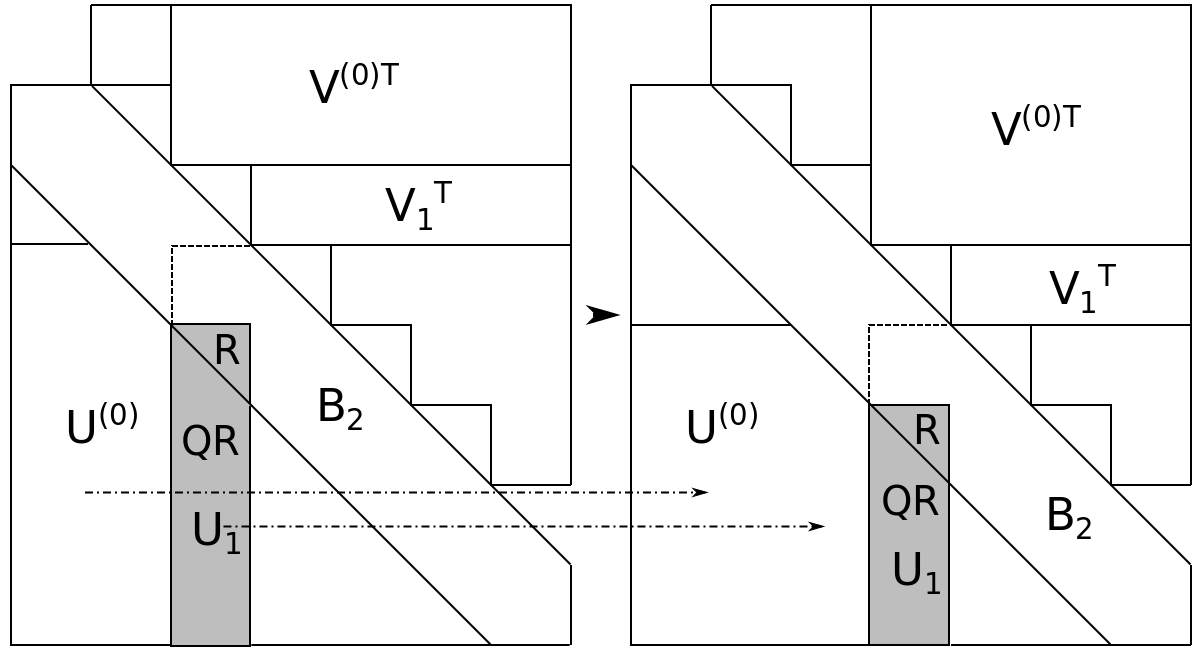}
\caption{A depiction of matrices used in Algorithm~\ref{alg:parDRSE} for two subsequent recursive steps.}
\label{fig:se_uv}
\end{figure}

\begin{proof}
Since $b\geq n/p^\delta$, we assume without loss of generality that $b\mod p^{1-\delta}=0$. 
We also note that since $b\geq n/p^\delta$, $z=(bp^\delta/n)^{(1-\delta)/\delta}\geq 1$.
We note that the dimensions of $A$, $U^{(0)}$, and $V^{(0)}$ at any recursive step will always be less than the dimension of the original matrix, $n$.
Algorithm~\ref{alg:parDRSE} assumes $A$, $U^{(0)}$, and $V^{(0)}$ are initially replicated.
Since each $b\times b$ block of these matrices is distributed cyclically and since $b\bmod q=0$ ($q=p^{1-\delta}$), the submatrix extraction and concatenation done between recursive steps, can preserve perfect load balance without communication.
To satisfy initial assumptions of the first invocation of Algorithm~\ref{alg:parDRSE}, we need to replicate the $A$ matrix.
Since, by assumption, it is distributed over all processors initially, the replication can be done with \(O(n^2/q^2)=O(n^2/p^{2(1-\delta)})\) horizontal communication cost.

At each recursive step, Algorithm~\ref{alg:parDRSE} performs a QR factorization, several matrix multiplications, and replicates $U_1$ and $V_1$.
Each $O(n)\times b$ QR factorization is done using a processor subgrid of dimensions $p^{1-\delta}\times z\times p^{2\delta- 1}$ with a total of $zp^\delta=p(b/n)^{(1-\delta)/\delta}$ processors (picked to minimize both communication and synchronization) using Algorithm~\ref{alg:rectQR}.
By Theorem~\ref{thm:parQRrect} and the fact that $z\geq 1$, it takes BSP time,
{\thmsmall
\begin{align*}
O\bigg(&\gamma\cdot \frac{n^{1/\delta}b^{3-1/\delta}}{p} + \beta\cdot \frac{nb}{p^\delta}
       + \alpha\cdot \frac{b}{n}p^\delta\log^2 p\bigg),
\end{align*}
}
using $M=O\big(\big(\frac{b^\delta n^{1-\delta}}{(zp^\delta)^{1-\delta}}\big)^2\big)=O\big(\big(\frac{n(b/n)^{(2\delta-1)/\delta}}{p^{1-\delta}}\big)^2\big)$ memory.

The two matrix multiplications on line~\ref{li:parDRSE:upd} and the five matrix multiplications on line~\ref{li:parDRSE:formW} (done right to left), all correspond to an $O(n)\times O(n)$ replicated matrix multiplied by an $O(n)\times b$ rectangular matrix.
By Lemma~\ref{lem:repupd}, with $w=\max(1,bp^{2-3\delta}/n)$, using $M=O(n^2/p^{2(1-\delta)}+nb/(wp^{\delta}))=O(n^2/p^{2(1-\delta)})$ memory, the time to compute these matrix multiplications is, if $U^{(0)}$ and $V^{(0)}$ start in cache,
{\mathsmall      $$O\left(\gamma\cdot \frac{n^2b}{p}
          + \beta\cdot \frac{nb}{p^\delta}
          + \alpha\cdot w\right).$$
}
In general (for any cache size), there is an additional cost of 
        $O(\nu\cdot\frac{(n/b)n^2}{p^{2(1-\delta)}})$.
The memory usage needed for these matrix multiplications is greater than that needed for the QR factorizations done by each set of processors.
Since $n^{1/\delta}b^{3-1/\delta}< n^2b$ the computation cost of these matrix multiplications also dominates that of the QR factorizations.

The matrix multiplications needed to compute line~\ref{li:parDRSE:formV} from right to left either operate on an $O(n)\times b$ matrix and a $b\times b$ matrix, like $W\cdot T$, or result in a $b\times b$ matrix, like $U^T\cdot (WT)$.
By Lemma~\ref{lem:parMM} any matrix multiplication where two of the matrix dimensions are $b$ and one is $O(n)$, with $v=p^{2-3\delta}$, takes BSP time,
{\mathsmall \[O\bigg(\gamma\cdot \frac{nb^2}{p} + \beta\cdot \left[\frac{nb}{p} + \frac{n^{2/3}b^{4/3}}{p^\delta}\right]+\alpha\cdot p^{2-3\delta}\log p\bigg).\]}
Since $b\leq n/\log p$, the above communication cost is never greater than that of the larger matrix multiplications, i.e.\ $n^{2/3}b^{4/3}/p^{\delta}\leq nb/p^\delta$.
The synchronization cost of the QR factorizations dominates that of of the matrix multiplications.

Replicating $U_1$ and $V_1$ over $c$ subsets of $q^2$ processors (line~\ref{li:UVrep}) can be done in time,
\(O\lt(\beta \cdot nb/p^{2(1-\delta)}+\alpha\rt)\).

Therefore, the cost over all $n/b-1$ recursive steps when all replicated matrices fit into cache (when $\cachesize>3n^2/p^{2(1-\delta)}$) is the total cost postulated in the theorem.
In the second scenario $\left(\text{when~} \cachesize<3n^2/p^{2(1-\delta)}\right)$, the algorithm incurs an extra additive factor of $O\big((n/b)\frac{wn^2}{p^{2(1-\delta)}}\big)$ in vertical communication cost.
The memory usage is dominated by the replicated matrix multiplication (invocation of Lemma~\ref{lem:repupd} above), which is also as stated in the theorem.
\end{proof}

  \subsection{Band-to-Band Reduction}
  \label{sec:se:bandred}
  
We now consider algorithms for reducing a banded matrix to a smaller band-width, while preserving eigenvalues.
We start by recalling a parallel algorithm designed for small band-widths~\cite{BDK13-TR}, then present
Algorithm~\ref{alg:sebra}, which is designed to exploit additional parallelism given larger starting band-widths.
Algorithm~\ref{alg:sebra} describes the QR factorizations and applications
necessary to reduce a symmetric banded matrix $A$ from band-width $b$ to band-width
$\h=b/k$ via bulge chasing. 
The algorithm eliminates $n/\h$ trapezoidal panels via QR factorization, each of which generate bulges of nonzeros in the trailing matrix.
Each bulge is subsequently chased down the band by $O(n/b)$ eliminations again done by QR factorizations. 
Every new panel elimination is done immediately after the previously generated bulge is chased twice (including its initial panel elimination).
Figure~\ref{fig:bch} depicts the QR factorizations necessary to eliminate a trapezoidal panel and chase two bulges generated from eliminating the first two panels, which are done concurrently in the algorithm.
This type of pipelined successive band reduction approach was first considered by~\cite{SBR1,SBR2}.
The CA-SBR algorithm in~\cite{BDK13-TR} is similar, but assigns each processor a set of bulge chases at each pipeline step, rather than performing each bulge chase with a set of processors as done in Algorithm~\ref{alg:sebra}.

\begin{lemma}
\label{lem:sbr_topc}
An $n\times n$ symmetric matrix (input in any load-balanced layout) of band-width $b\leq n/p$ can be reduced to one with the same eigenvalues and band-width $b/2$, using $M=O(nb/p)$ memory, in BSP time,
{\thmsmall 
\begin{align*}
O\bigg(&\gamma\cdot \frac{n^2b}{p} + \beta\cdot nb +\nu\cdot \frac{n^2}{p}+ \alpha\cdot p \bigg).
\end{align*}
}
\end{lemma}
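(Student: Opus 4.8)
The plan is to analyze the CA-SBR-style bulge-chasing algorithm of~\cite{BDK13-TR} restated as Algorithm~\ref{alg:sebra}, in which each individual bulge chase is a small QR factorization performed by a dedicated set of processors, and the chases are pipelined so that at any pipeline step $O(p)$ chases proceed concurrently. First I would fix the granularity: since the target band-width is $b/2$ and $b\leq n/p$, there are $n/(b/2)=2n/b$ panels to eliminate, and each bulge created by a panel elimination must be chased $O(n/b)$ times down the band, so the total number of QR factorizations is $O\big((n/b)\cdot(n/b)\big)=O(n^2/b^2)$, each involving a matrix of size $O(b)\times O(b)$. The key structural fact (from~\cite{SBR1,SBR2} and~\cite{BDK13-TR}) is that these chases can be organized into $O(n/b)$ pipeline stages, each stage consisting of a constant number of BSP supersteps, so that the algorithm runs in $O(n/b)$ supersteps overall while keeping all $p$ processors busy — this is where the synchronization count $S=O(n/b)$ would come from. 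But we are told the answer is $S=O(p)$, so I would instead assign $\Theta(b/(n/p))=\Theta(bp/n)$ processors' worth of work, i.e.\ group the bulge chases so that each BSP superstep advances the pipeline and there are $O(p)$ of them; concretely, since $b\le n/p$, we have $n/b\ge p$, and the pipeline schedule of~\cite{BDK13-TR} is tuned so that $S=O(p)$ supersteps suffice when $b\le n/p$, with each processor responsible for a contiguous sweep of chases.

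Next I would account the four cost terms superstep by superstep. For computation: each of the $O(n^2/b^2)$ QR factorizations on an $O(b)\times O(b)$ matrix costs $O(\gamma\cdot b^3)$ work by Lemma~\ref{lem:seqqr}, but this work is spread over the processors handling that chase; summing, the total work is $O(\gamma\cdot (n^2/b^2)\cdot b^3)=O(\gamma\cdot n^2 b)$, and since it is load-balanced across $p$ processors we get $F=O(n^2b/p)$. Applying each $Q$ factor to the two affected off-diagonal blocks of the band (the two-sided update that both annihilates the current bulge and spawns the next one) is also $O(b^3)$ work per chase, of the same order. For horizontal communication: each chase requires its processor set to exchange the $O(b)\times O(b)$ Householder data and the block to be updated with the neighboring processor set down the band, costing $O(\beta\cdot b^2)$ words per chase handled; a processor handles $O(n/b)$ chases as a bulge passes through its portion of the band, and there are $O(p)$ pipeline steps, giving $W=O(\beta\cdot (n/b)\cdot b^2)=O(\beta\cdot nb)$ — I would double-check the bookkeeping so that the per-processor word count across all supersteps is indeed $O(nb)$ and not larger. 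For vertical communication: the dominant term comes from the fact that the banded matrix has $O(nb)$ nonzeros total, hence $O(nb/p)$ per processor, but over the course of the $2n/b$ panel sweeps each entry of the band is touched $O(n/b)$ times as bulges chase past, and by Lemma~\ref{lem:seqMM} the cache traffic of each $O(b)\times O(b)$ update is $O(\nu\cdot b^2)$; multiplying the number of updates per processor $O((n/b)\cdot(n/b))$ by $b^2$ and dividing appropriately gives $Q=O(\nu\cdot n^2/p)$. For synchronization, the pipeline has $O(p)$ stages each of $O(1)$ supersteps, so $S=O(p)$.

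The main obstacle will be establishing the pipeline schedule correctness and the $S=O(p)$ bound: I must show that the chases can be laid out so that (i) data dependencies (a chase cannot start until the bulge from the previous chase arrives and until the panel it would otherwise collide with has been cleared — the ``done after the previous bulge is chased twice'' rule) are respected, (ii) no two concurrently-scheduled chases touch overlapping band regions, and (iii) the whole sweep completes in $O(p)$ synchronization rounds rather than the naive $O(n^2/b^2)$ or $O(n/b)$. This is exactly the pipelining argument of~\cite{SBR1,SBR2,BDK13-TR}, and I would invoke it, verifying only that the processor-to-chase assignment (a set of processors per chase, rather than a chase per processor as in~\cite{BDK13-TR}) does not change the stage count and that memory per processor stays $O(nb/p)$, since each processor need only hold $O(1)$ band blocks of size $O(b^2)=O(b\cdot b)\le O(nb/p)$ at a time together with its share of the resident band. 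The remaining steps — plugging Lemma~\ref{lem:seqqr}, Lemma~\ref{lem:seqMM}, and Lemma~\ref{lem:parMM} (or just the sequential lemmas, since each chase's processor group is small) into the per-superstep accounting and summing — are routine once the schedule is in hand.
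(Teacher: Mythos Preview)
Your proposal conflates two distinct algorithms that the paper deliberately separates. Algorithm~\ref{alg:sebra} (2.5D-Band-to-Band) is \emph{not} a restatement of CA-SBR from~\cite{BDK13-TR}; the paper states explicitly that CA-SBR ``assigns each processor a set of bulge chases at each pipeline step, rather than performing each bulge chase with a set of processors as done in Algorithm~\ref{alg:sebra}.'' Lemma~\ref{lem:sbr_topc} is about CA-SBR in the regime $b\le n/p$, where each chase is tiny and a single processor performs many of them sequentially. Your opening plan---``each individual bulge chase is a small QR factorization performed by a dedicated set of processors''---is the wrong assignment here: with $b\le n/p$ your proposed group size $\Theta(bp/n)\le 1$, which you notice but do not resolve. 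The correct picture is the reverse: each processor owns $O(n/p)$ columns of the band and handles all chases that fall inside that block.

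This misidentification propagates into the accounting. You state the ``number of updates per processor'' as $O((n/b)\cdot(n/b))$, but that is the \emph{total} number of chases; per processor it is $O(n^2/(pb^2))$, which is what actually yields $Q=O(\nu\cdot n^2/p)$ after multiplying by the $O(\nu\cdot b^2)$ per-chase cache traffic from Lemmas~\ref{lem:seqMM} and~\ref{lem:seqqr}. Your ``dividing appropriately'' hides exactly this factor of $p$. Likewise, in the horizontal-communication step, the $O(nb)$ bound arises because each of the $O(n/b)$ bulges crosses each processor boundary once, carrying $O(b^2)$ words---not from ``a processor handles $O(n/b)$ chases.'' Once you fix the processor-to-work mapping (each processor owns a column slab and does $O(n^2/(pb^2))$ sequential chases; the pipeline has $O(p)$ supersteps as in~\cite{BDK13-TR}), the four cost terms fall out exactly as in the paper's proof, which simply cites~\cite{BDK13-TR} for $F$, $W$, $S$, and then derives $Q$ from the per-chase vertical cost.
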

\begin{proof}
We consider the cost of one step of the CA-SBR algorithm~\cite{BDK13-TR}.
A redistribution from any initial layout costs $O(\beta\cdot nb +\alpha)$.
The analysis in~\cite{BDK13-TR} shows that the cost of reducing from bandwidth $b$ to $b/2$ has the computation, horizontal communication, and synchronization costs, as well as the memory usage postulated in the lemma.
The algorithm consits of a bulge chase pipeline, executed in $O(p)$ parallel steps, in which each processor works on $O(n/p)$ columns, chasing $O(n/(pb))$ bulges $O(n/(pb))$ times, for a total of $O(n^2/(p^2b^2))$ bulge chases.
Since each bulge chase consists of a QR factorization and a matrix multiplication, with matrices of size $O(b)\times O(b)$, by Lemma~\ref{lem:seqMM} and Lemma~\ref{lem:seqqr}, the vertical communication cost is $O(\nu\cdot b^2)$ for each bulge chase.
Summing the costs of the bulge chases over all parallel steps yields the postulated total cost.
\end{proof}

We now consider the cost of Algorithm~\ref{alg:sebra}. 
Its primary innovation is to perform each QR factorization and update in parallel using a subset of processors, leveraging both pipelined parallelism across different bulge chases as well as parallelism within a bulge chase.

\begin{lemma}
\label{lem:sbr}
Algorithm~\ref{alg:sebra} can reduce an $n\times n$ symmetric matrix (input in any evenly-distributed layout) of band-width $b\geq n/p$ to one with the same eigenvalues and band-width $b/k$, using $M=O((n^{1-\delta}b^{\delta}/p^{1-\delta})^2)$ memory for any $\delta\in[1/2,2/3]$ and any $k\leq 1+p^{2-3\delta}$, in BSP time,
{\thmsmall
\begin{align*}
O\bigg(&\gamma\cdot \frac{n^2b}{p} + \beta\cdot \frac{n^{1+\delta}b^{1-\delta}}{p^\delta} + \alpha\cdot \frac{k^\delta n^{1-\delta}p^\delta}{b^{1-\delta}}\log{p} \bigg).
\end{align*}
}
\end{lemma}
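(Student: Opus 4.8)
\textbf{Proof plan for Lemma~\ref{lem:sbr}.}
The plan is to analyze Algorithm~\ref{alg:sebra} by accounting separately for (i) an initial redistribution, (ii) the cost of a single QR factorization / update step on a $2b/k \times b/k$-ish panel done by a subgrid of processors, and (iii) the number of pipeline steps, then multiplying appropriately. First I would handle the redistribution: since the input is in an arbitrary evenly-distributed layout, moving to the cyclic band layout required by the pipeline costs $O(\beta\cdot nb/p + \alpha)$, which will be dominated by the main terms. Next, I would count the structure of the bulge-chasing pipeline: there are $n/\h = nk/b$ panels to eliminate, and each bulge is chased $O(n/b)$ times down the band, so there are $O(n^2k/b^2)$ QR-plus-update operations in total; with the prescribed pipelining (a new panel eliminated every two bulge-chase steps) these are organized into $O(n^2k/b^2 \cdot \text{(pipeline depth factor)})$ parallel steps — more precisely $O(nk/b)$ panels times $O(n/b)$ chase-depth gives $O(n^2k/b^2)$ steps if run serially, but pipelining brings concurrency so I need to argue the number of \emph{parallel} steps is $O(n^2k/b^2)$ with a modest slowdown, or recompute it as $\Theta((n/b)\cdot(n k/b))$ supersteps-worth of work spread over the available $p$ processors.

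The core of the argument is costing one QR-and-update operation. Each such operation factors an $O(b/k)\times O(b/k)$ (really $O(b)\times O(b/k)$ trapezoidal) panel and applies the resulting $Q = I - UTU^T$ as a two-sided update to an $O(b)\times O(b)$ trailing block. I would assign each of these operations a subgrid of $p' = p k / (\text{number of concurrent operations})$ processors; the natural choice that balances communication and synchronization is $p' = \Theta(pb/(nk) \cdot k) $-ish, so that over the $\Theta(nk/b)$ concurrent bulge chases in a pipeline stage the $p$ processors are fully used. Applying Theorem~\ref{thm:parQRrect} (or Lemma~\ref{lem:parQRsq} for the nearly-square base case) and Lemma~\ref{lem:parMM} to these $O(b)\times O(b/k)$ QR factorizations and $O(b)\times O(b)$ matrix multiplications with $p'$ processors, and choosing $\delta\in[1/2,2/3]$ as the memory/replication parameter, gives per-operation costs whose computation term sums to $O(\gamma\cdot n^2 b/p)$, whose communication term sums to $O(\beta\cdot n^{1+\delta}b^{1-\delta}/p^\delta)$, and whose synchronization term, multiplied by the $O(k^\delta n^{1-\delta}p^\delta/b^{1-\delta})$ pipeline steps, yields the stated $\alpha$-term with the extra $\log p$ coming from the QR reduction tree. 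The memory bound $M=O((n^{1-\delta}b^\delta/p^{1-\delta})^2)$ should follow from the memory requirement of Theorem~\ref{thm:parQRrect}/Lemma~\ref{lem:parMM} applied at scale $p'$ to matrices of dimension $O(b)$, together with the need to store a band-slice of width $O(b)$ replicated across each subgrid.

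The constraint $k\le 1+p^{2-3\delta}$ enters exactly where one needs the per-bulge subgrid size $p'$ to be at least $1$ (or at least large enough that the $3$D/$2.5$D matrix-multiplication regime of Lemma~\ref{lem:parMM} applies with $v\ge 1$): with more than $p^{2-3\delta}$ splits the sub-problems would be too small to keep using all processors efficiently, so this is the regime boundary, analogous to $k\le 1+p^{2-3\delta}$ appearing in the full-to-band analysis. I expect the main obstacle to be the bookkeeping of the pipeline: carefully establishing that the $O(n^2k/b^2)$ bulge-chase operations can be scheduled in $O(k^\delta n^{1-\delta}p^\delta/b^{1-\delta})$ parallel supersteps while keeping all $p$ processors load-balanced and while respecting the dependency that each new panel is eliminated only after the previous bulge has been chased twice — in other words, proving the pipeline actually achieves the claimed concurrency rather than serializing. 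The rest is a matter of substituting the building-block costs and verifying, as in Lemma~\ref{lem:parsymdir}, that the matrix-multiplication communication terms never dominate the QR communication terms given $n/p\le b$, and that the lower-order $b\times b$-by-$b\times b$ products (forming $V$ from $U$, $T$, $W$) contribute nothing asymptotically new.
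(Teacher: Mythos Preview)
Your overall strategy matches the paper's: cost a single QR-plus-update using Theorem~\ref{thm:parQRrect} and Lemma~\ref{lem:parMM}, then multiply by the number of pipeline phases. However, your pipeline bookkeeping is inverted and this is where the argument would actually break. In Algorithm~\ref{alg:sebra} there are $n/b$ processor groups $\hat\Pi_j$, each of size $\hat p = pb/n$, and at most $O(n/b)$ of them work \emph{concurrently}; the number of \emph{sequential} phases is $O(n/\h)=O(nk/b)$. You have this backwards (you write ``$\Theta(nk/b)$ concurrent bulge chases''), and your claimed ``$O(k^\delta n^{1-\delta}p^\delta/b^{1-\delta})$ pipeline steps'' is not the phase count at all --- that expression is the final synchronization bound you are trying to \emph{derive}, so invoking it as the number of steps is circular.

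A second point you are missing is that the paper uses \emph{different} processor counts for the QR and for the update multiplications within a phase: the matrix multiplications use all $\hat p = pb/n$ processors in the group, but the QR on the $(b-\h)\times \h$ panel uses only $\bar p = pb/(n k^{(1-\delta)/\delta})$ processors. This choice is what makes the per-phase synchronization come out to $k^{\delta-1}(pb/n)^\delta\log p$, which after multiplication by the $O(nk/b)$ phases yields exactly $k^\delta n^{1-\delta}p^\delta b^{\delta-1}\log p$. With a single undifferentiated $p'$ you will not recover the $k^\delta$ factor correctly. Finally, your reading of the constraint $k\le 1+p^{2-3\delta}$ is close: in the paper it arises from requiring $v=\hat p^{\,2-3\delta}/(k-1)\ge 1$ in the invocation of Lemma~\ref{lem:parMM}, not from $p'\ge 1$.
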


\begin{proof}
The cost of each inner loop iteration (loop on line~\ref{li:2.5D-SBR:forj}) can be derived from the costs of the matrix multiplications and QR done inside it.
Let the pair $(i,j)$ correspond to the the $i$th iteration of the outer loop and $j$th iteration of the inner loop.
Figure~\ref{fig:bch} displays the QR factorizations and updates computed during a few such iterations.
Each iteration computes a QR factorization of a matrix with dimensions at most $(b-\h) \times \h$, $B[I_\mathrm{qr.rs}, I_\mathrm{qr.cs}]$ on 
line~\ref{li:2.5D-SBR:QR} with $\bar{p}=pb/(nk^{(1-\delta)/\delta})$ processors.
The BSP time to compute such a QR factorization is by Theorem~\ref{thm:parQRrect} for $\delta\in [1/2,2/3]$,
{\mathsmall \begin{align*}
O\bigg(&\gamma\cdot \frac{b\h^2}{\bar{p}}
        + \beta\cdot \frac{b^\delta \h^{2-\delta}}{\bar{p}^\delta} 
    + \alpha\cdot \bar{p}^\delta\log(\bar{p}) \bigg) \\
=
O\bigg(&\gamma\cdot \frac{nb^2}{k^{3-1/\delta}p}
        + \beta\cdot \frac{n^\delta b^{2-\delta}}{kp^\delta} 
    + \alpha\cdot k^{\delta-1}(pb/n)^\delta\log{p} \bigg)
\end{align*}}
The amount of memory needed for this QR factorization is given in Lemma~\ref{thm:parQRrect} as $M=O\big((\h^\delta b^{1-\delta}/\bar{p}^{1-\delta})^2\big)=O((n^{1-\delta}b^{\delta}/(p^{1-\delta}k^{(2\delta -1)/\delta}))^2)$.

The matrix multiplications to form the $V$ matrix are on lines \ref{li:2.5D-SBR:formW} and \ref{li:2.5D-SBR:formV}, while those to perform the updates are on lines \ref{li:2.5D-SBR:updA1} and \ref{li:2.5D-SBR:updA2}.
The matrix multiplications on line \ref{li:2.5D-SBR:formV} should be done from right to left.
We can then observe that the most costly matrix multiplications in Algorithm~\ref{alg:sebra} are $B[I_\mathrm{up.cs}, I_\mathrm{qr.rs}]U$ on line~\ref{li:2.5D-SBR:formW} and the updates $UV^T$ and $VU^T$ on lines \ref{li:2.5D-SBR:updA1} and \ref{li:2.5D-SBR:updA2}.
In the first case, a $(3b-\h)\times(b-\h)$ is multiplied by a $(b-\h)\times \h$ matrix, while the update $UV^T$ involve $(3b-\h)\times \h$ matrix multiplied by an $\h \times (b-\h)$ matrix ($VU^T$ is just the transpose of the former).
In both cases, by Lemma~\ref{lem:parMM} with $v=\hat{p}^{2-3\delta}/(k-1)$ (we subtract one from $k$ to make sure $v\geq 1$), the BSP time to compute the matrix multiplications using $\hat{p}$ processors is
{\mathsmall 
\begin{align*}
    O\bigg(&\gamma\cdot \frac{b^2h}{\hat{p}} 
    + \beta\cdot\frac{b^2}{k\hat{p}^\delta} 
    + \alpha\cdot \frac{\hat{p}^{2-3\delta}}{k}\log p  \bigg) \\
=    O\bigg(&\gamma\cdot \frac{nb^2}{kp} 
    + \beta\cdot\frac{n^\delta b^{2-\delta}}{kp^\delta} 
    + \alpha\cdot \frac{(pb/n)^{2-3\delta}}{k}\log p  \bigg),
    \end{align*}
}
with a memory footprint of $M=O(b^2/\hat{p}+(b^2h/(v\hat{p}))^{2/3})=O((b/\hat{p}^{1-\delta})^2)=O((n^{1-\delta}b^{\delta}/p^{1-\delta})^2)$, which is greater than the memory needed to perform the QR factorizations.
The other matrix multiplications have strictly lower cost and the cost of redistributions necessary for all of these matrix multiplications is included in the horizontal communication cost of Lemma~\ref{lem:parMM}.
As $A$ and $B$ are stored in load balanced layouts, each processor subset can obtain the submatrix which it factorizes and the submatrix which it updates at every iteration with $O(b^2/\hat{p})$ horizontal communication.

\begin{algorithm}[t]
{\algsmall
\caption {$[B]\gets 
            \text{2.5D-Band-to-Band}(A,\Pi,b,k)$}
\label{alg:sebra}
\begin{algorithmic}[1]
\Require {Given positive integers $b,p,n,k$ and $\h=b/k$ 
           with $n\bmod b \equiv 0$ and $b\bmod k \equiv 0$:
           $A$ is a banded symmetric matrix of dimension $n$ with band-width $b\leq n$,
                       $\hat{\Pi}_j\subset \Pi$ is
           the $j$th group of $\hat{p}\equiv pb/n$ processors for $j\in\inti{1}{n/b}$.}
\State Set $B=A$
\State Let $B[(j-1)b+1 \TO jb\CMA (j-1)b+1\TO jb]$ be replicated in $\hat{\Pi}_j$
  over $(bp/n)^{2\delta -1}$ subsets of  $(bp/n)^{2(1-\delta)}$ processors.
\Comment{Iterate over panels of $B$}
\For{ $i \in \inti{1}{n/h-1}$ } \label{li:2.5D-SBR:fori}
          \Comment {$\hat{\Pi}_j$ applies chase $j$ of bulge $i$ as soon as $\hat{\Pi}_{j-1}$ executes chase $(j-1)$}
  \For {$j=1:\lfloor(n-i\h-1)/b\rfloor$} \label{li:2.5D-SBR:forj}
    \Comment{Define row and column offsets}
    \State Let $o_\mathrm{blg}=(i-1)\h+(j-1)b, \ \  o_\mathrm{qr.r}=o_\mathrm{blg}+\h$
    \If {$j=1$} $o_\mathrm{qr.c}=o_\mathrm{qr.r}-\h, \ \ o_\mathrm{v} = 0$
    \Else \ $o_\mathrm{qr.c}=o_\mathrm{qr.r}-b, \ \ o_\mathrm{v} = b-\h, \ \ o_\mathrm{up.c} = o_\mathrm{qr.c}+\h$ \EndIf
    \Comment{Define index ranges needed for bulge chase}
    \State $n_\mathrm{r}= \min(n-o_\mathrm{qr.r}, b),  n_\mathrm{c}= \min(n-o_\mathrm{up.c}, \h+3b)$
    \State $I_\mathrm{qr.rs}=o_\mathrm{qr.r}+(1\TO n_\mathrm{r}), \ \ I_\mathrm{qr.cs}=o_\mathrm{qr.c}+(1\TO \h)$
    \State $I_\mathrm{v.rs}=o_\mathrm{v}+(1:n_\mathrm{r}), \ \ I_\mathrm{up.cs}=o_\mathrm{up.c}+(1\TO n_\mathrm{c})$
    \Comment { Perform a rectangular parallel QR factorization }
    \State $[U,T,R] \gets \text{QR}(B[I_\mathrm{qr.rs}, I_\mathrm{qr.cs}],\hat{\Pi}_j[1\TO p\h/n])$ \label{li:2.5D-SBR:QR}
    \State  $B[I_\mathrm{qr.rs}, I_\mathrm{qr.cs}]=\begin{bmatrix} R \\ 0 \end{bmatrix}$, 
               $B[I_\mathrm{qr.cs}, I_\mathrm{qr.rs}]=\begin{bmatrix} R \\ 0 \end{bmatrix}^T$
    \Comment { Perform trailing matrix updates }
    \State  $W=B[I_\mathrm{up.cs}, I_\mathrm{qr.rs}]UT, \quad V=-W$ \label{li:2.5D-SBR:formW}
    \State  $V[I_\mathrm{v.rs}, \ALL]=V[I_\mathrm{v.rs}, \ALL]+ \frac{1}{2}U(T^T(U^TW[I_\mathrm{v.rs}, \ALL]))$ \label{li:2.5D-SBR:formV}
    \State  $B[I_\mathrm{qr.rs}, I_\mathrm{up.cs}]=B[I_\mathrm{qr.rs}, I_\mathrm{up.cs}]+UV^T$ \label{li:2.5D-SBR:updA1}
    \State  $B[I_\mathrm{up.cs} , I_\mathrm{qr.rs}]=B[I_\mathrm{up.cs} , I_\mathrm{qr.rs}]+VU^T$ \label{li:2.5D-SBR:updA2}
  \EndFor
\EndFor

\Ensure { $B$ is a banded matrix with band-width $h$ and the same eigenvalues as $A$ } \end{algorithmic}
}
\end{algorithm}
\begin{figure}[t] 
\centering
\includegraphics[width=3.3in]{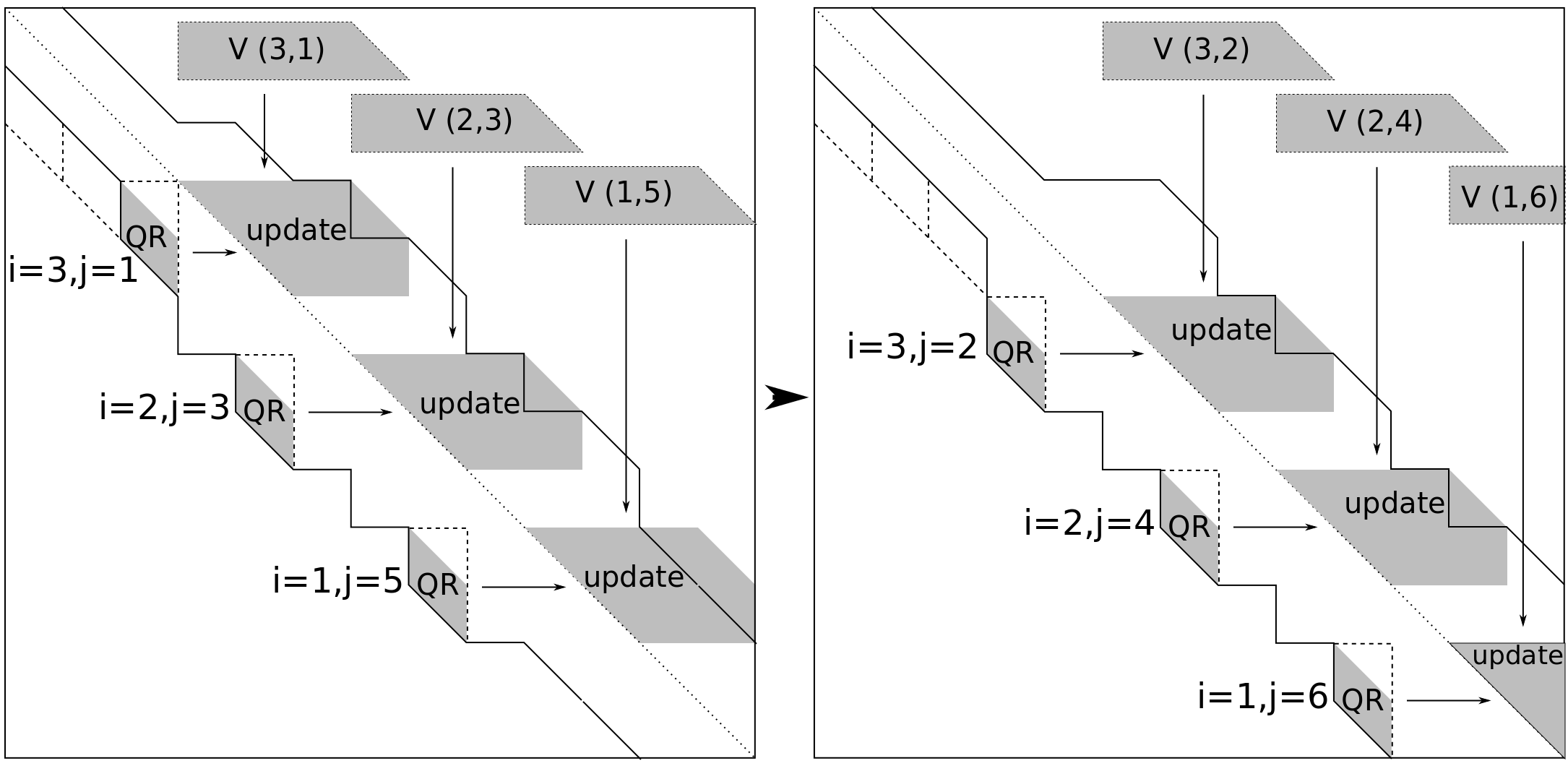}
\caption{QR factorizations and updates in iterations $(i,j)\in\{(3,1),(2,3),(1,5)\}$ (left) and $(i,j)\in\{(3,2),(2,4),(1,6)\}$ (right) of Algorithm~\ref{alg:parDRSE} with $k=2$. These two sets of iterations are executed concurrently by processor groups $\hat{\Pi}_1$, $\hat{\Pi}_3$, and $\hat{\Pi}_5$ (left) and  $\hat{\Pi}_2$, $\hat{\Pi}_4$, and $\hat{\Pi}_6$ (right), respectively. Only the unique part of the trailing matrix update is shown, while the pseudocode performs both symmetric reflections of it. Each matrix $V$ is labeled with the iteration in which it is computed.}
\label{fig:bch}
\end{figure}

Thus, the overall cost for each iteration of Algorithm~\ref{alg:sebra} is the sum of the two different costs above,
    {\mathsmall\begin{align*}
    O\bigg(&\gamma\cdot \frac{nb^2}{kp} 
    + \beta\cdot\frac{n^\delta b^{2-\delta}}{kp}  
    + \alpha\cdot k^{\delta-1}(pb/n)^\delta\log{p} \bigg).
    \end{align*}}
For a given outer loop (line~\ref{li:2.5D-SBR:fori}) iteration $i$, each $j$ loop iteration (line~\ref{li:2.5D-SBR:forj}) is done by a different processor group.
The total number of inner loop iterations is roughly $(n/h)(n/b)/2$ and they are pipelined among $n/b$ groups of processors, up to $n/(2b)$ of them working concurrently on different bulge chases at any given time.
Consequently, the algorithm can be executed in $O(n/h)$ phases, where at the $i$th phase, $\min(i-1,(n-i\h)/(2b))$ processor groups chase bulges concurrently and the $i$th panel is eliminated.
At each phase, a synchronization and data exchange is required between the QR factorization and trailing matrix updates computed by adjacent active processor groups.
Therefore, the BSP cost of each recursive step of the algorithm corresponds to the cost of computing $O(n/\h)=O(kn/b)$ inner loop iterations using one processor group, which corresponds to the cost postulated in the lemma.
\end{proof}

  \subsection{Complete Symmetric Eigensolver}
  \label{sec:se:sucred}

Algorithm~\ref{alg:25dse} combines our algorithms for full-to-band reduction (Algorithm~\ref{alg:parDRSE}) with multiple subsequent stages of band-to-band reduction (Algorithm~\ref{alg:sebra}) and band-halving steps of the CA-SBR algorithm from~\cite{BDK13-TR}, which we refer to as CA-BR.
Algorithm~\ref{alg:parDRSE} reduces the symmetric matrix to one with band-width at most $n/\log p$.
Algorithm~\ref{alg:sebra} is then used to successively half the band-width to $n/p^\delta$.
Subsequently, the CA-BR algorithm (same function signature as 2.5D-Band-to-Band) is used to reduce the band-width to $n/p$.
At that point, the matrix is small enough for one processor to compute the eigenvalues efficiently.

For every 2.5D-Band-to-Band step that reduces the band-width by a factor of $k$, Algorithm~\ref{alg:sebra} reduces the number of processors used by $k^\zeta$ where $\zeta= (1-\delta)/\delta$. The parameter $\zeta$ is chosen to be $(1-\delta)/\delta$ in order to keep the per-stage horizontal cost term $O(nb/p^\delta)$ from increasing at each recursive step, since $n(b/k)/(p/k^\zeta)^\delta=nb/p^\delta$.
Decreasing the number of active processors in this way also keeps the synchronization cost equal at every stage. 
Overall, we now obtain a parallel algorithm that has horizontal communication of $O(n^2/p^\delta)$, vertical communication of $O(n^2\log p/p^\delta)$, and $O(p^\delta \log^2 p)$ synchronizations.
Modulo logarithmic cost factors in vertical communication and synchronization, this amounts to the same communication cost as the best known algorithms for LU and QR factorization~\cite{snirmatmul,Tiskin2007179,SD_EUROPAR_2011}.

\begin{algorithm}[t]
{\algsmall
\caption {$[D]\gets 
            \text{2.5D-Symmetric-Eigensolver}(A,\Pi)$}
\label{alg:25dse}
\begin{algorithmic}[1]
\Require {Given positive integers $p$, $n$, and $\delta \in [1/2,2/3]$
           with $n\bmod b \equiv 0$,
           $A$ is a symmetric matrix of dimension $n$.}
\State Let $b=\frac{n}{\max(p^{2-3\delta},\log p)}$, $k=2$, and $\zeta = (1-\delta)/\delta$
\State Set $B=A$
\State Execute $B=\text{2.5D-Full-to-Band}(A,\{\},\{\},\Pi,b)$
\For{ $i =0\TO \log_2(bp^\delta/n)-1$ } 
  \State Let $\bar{\Pi}= \Pi[1\TO p/k^{i\zeta}]$
  \State Gather $B$ onto $\bar{\Pi}$
  \State Execute $B=\text{2.5D-Band-to-Band}(B,b/k^{i},\bar{\Pi},k)$
\EndFor
\State Let $\bar{\Pi}= \Pi[1\TO p^\delta]$
\For{ $i =0\TO \log_2(p^{1-\delta})-1$ } 
  \State Execute $B=\text{CA-BR}(B,n/(p^\delta k^{i}),\bar{\Pi},k)$
\EndFor
\State Gather $B$ onto a processor and compute its eigenvalues $D$
\Ensure { $D$ is a vector containing the eigenvalues of $A$}
\end{algorithmic}
}
\end{algorithm}

\begin{theorem}
\label{thm:parse}
Algorithm~\ref{alg:25dse} computes the eigenvalues of a symmetric $n$-by-$n$ matrix (input in any evenly-distributed layout), using $M=O(n^2/p^{2(1-\delta)})$ memory for any $\delta\in[1/2,2/3]$, in BSP time,
{\thmsmall
\begin{align*}
O\bigg(&\gamma\cdot \frac{n^3}{p} + \beta\cdot \frac{n^2}{p^\delta} + \nu\cdot\frac{n^2\log p}{p^\delta}+ \alpha\cdot p^\delta\log^2{p} \bigg).
\end{align*}
}
\end{theorem}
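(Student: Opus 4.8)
The plan is to analyze Algorithm~\ref{alg:25dse} phase by phase, summing the BSP costs of the three reduction stages plus the final sequential eigenvalue computation, and verifying that each stage individually fits within the claimed budget $O(\gamma\cdot n^3/p + \beta\cdot n^2/p^\delta + \nu\cdot n^2\log p/p^\delta + \alpha\cdot p^\delta\log^2 p)$. First I would handle the full-to-band stage: by Lemma~\ref{lem:parsymdir} with band-width $b=n/\max(p^{2-3\delta},\log p)$, this reduces $A$ to banded form in time $O(\gamma\cdot n^3/p + \beta\cdot n^2/p^\delta + \alpha\cdot p^\delta\log^2 p)$, and the extra vertical term is $O(\nu\cdot (n/b)\,n^2/p^{2(1-\delta)})$; since $n/b=\max(p^{2-3\delta},\log p)$, this is $O(\nu\cdot n^2\max(p^{2-3\delta},\log p)/p^{2(1-\delta)})=O(\nu\cdot n^2\max(1/p^\delta,\log p/p^{2(1-\delta)}))$, which I need to check is dominated by $\nu\cdot n^2\log p/p^\delta$ (using $\delta\le 2/3$ so $p^{2(1-\delta)}\ge p^\delta$). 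I would also note the initial replication cost of $A$ is $O(n^2/p^{2(1-\delta)})$ words, absorbed into $\beta\cdot n^2/p^\delta$ since $\delta\le 2/3$.

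Next I would handle the middle stage of $O(\log p)$ band-to-band reductions. At step $i$ (for $i=0,\dots,\log_2(bp^\delta/n)-1$), Algorithm~\ref{alg:sebra} is invoked on a band-width $b_i=b/2^i$ using $p_i=p/2^{i\zeta}$ processors with $k=2$ and $\zeta=(1-\delta)/\delta$. The key algebraic identity, which the paper flags explicitly, is that $n b_i/p_i^\delta = n(b/2^i)/(p/2^{i\zeta})^\delta = nb/p^\delta$ because $\zeta\delta=1-\delta$, so $2^{i\zeta\delta}=2^{i(1-\delta)}$ and one checks $b_i p_i^{-\delta}\cdot$ the relevant power works out — more precisely I should verify $n^{1+\delta}b_i^{1-\delta}/p_i^\delta$ from Lemma~\ref{lem:sbr} equals $n^{1+\delta}b^{1-\delta}/p^\delta$ at every step (it does, by the same exponent bookkeeping), and that this equals $n^2/p^\delta$ when $b$ is at its maximum value $n/p^{2-3\delta}$ (giving $n^{1+\delta}(n/p^{2-3\delta})^{1-\delta}/p^\delta = n^2 p^{-(2-3\delta)(1-\delta)-\delta}$, which I must confirm is $O(n^2/p^\delta)$, i.e.\ $(2-3\delta)(1-\delta)+\delta\ge\delta$, clearly true). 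Similarly the synchronization term $k^\delta n^{1-\delta}p_i^\delta/b_i^{1-\delta}\cdot\log p$ is constant across $i$ by the same cancellation, and at the largest band-width equals $O(p^\delta\log p)$; summing over the $O(\log p)$ steps gives $O(p^\delta\log^2 p)$. The computation term $n^2 b_i/p_i$ decreases geometrically, so its sum is $O(n^2 b/p)=O(n^3/p)$. Vertical communication for this stage comes only from the local matmul/QR inside Lemma~\ref{lem:sbr} (which, like Lemma~\ref{lem:parQRsq}, has vertical cost dominated by horizontal since $\nu\le\beta$), so it contributes at most $O(\nu\cdot n^2/p^\delta)$ per step and $O(\nu\cdot n^2\log p/p^\delta)$ total; the gather of $B$ onto $\bar\Pi$ before each call moves $O(nb_i)$ words, dominated by the $\beta$ budget.

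Then I would handle the CA-BR stage: $O(\log p)$ band-halving steps via Lemma~\ref{lem:sbr_topc} on band-widths $n/p^\delta,\,n/(2p^\delta),\dots$ down to $n/p$, each using $p^\delta$... wait — Lemma~\ref{lem:sbr_topc} requires $b\le n/p$, so actually these run on all $p$ processors with $b_i'=n/(p^\delta 2^i)$; I should double check the invocation uses $\bar\Pi$ of size $p^\delta$ as written, in which case $b_i'\le n/p^\delta$ meets the $b\le n/(\text{number of processors})$ hypothesis with $p^\delta$ processors. Either way, Lemma~\ref{lem:sbr_topc} gives per-step cost $O(\gamma\cdot n^2 b_i'/p' + \beta\cdot n b_i' + \nu\cdot n^2/p' + \alpha\cdot p')$; with $p'=p^\delta$ and $b_i'\le n/p^\delta$ the horizontal term is $O(\beta\cdot n^2/p^\delta)$, the vertical term $O(\nu\cdot n^2/p^\delta)$, computation $O(\gamma\cdot n^3/p^{2\delta})$ which since $\delta\ge 1/2$ is $O(\gamma\cdot n^3/p)$, and synchronization $O(\alpha\cdot p^\delta)$; summing over $O(\log p)$ steps stays within budget (computation and horizontal terms decrease geometrically or are $\log p$-bounded). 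Finally, gathering the band-width-$n/p$ matrix onto one processor costs $O(\beta\cdot n^2/p)$ and $O(\alpha)$, and the sequential eigenvalue computation on an $n\times n$ matrix of band-width $n/p$ costs $O(\gamma\cdot n^2\cdot(n/p))=O(\gamma\cdot n^3/p)$ work with $O(\nu\cdot n^2/p\cdot\text{something})$ vertical cost, all within budget. Adding the three stages and the finalization, and taking the max of the geometric sums, yields exactly the stated BSP time, with memory $O(n^2/p^{2(1-\delta)})$ inherited as the maximum over Lemma~\ref{lem:parsymdir} and Lemma~\ref{lem:sbr} (the latter's $O((n^{1-\delta}b^\delta/p^{1-\delta})^2)$ is maximized at $b=n/p^{2-3\delta}$ giving $O((n^{2-3\delta}\cdot\text{stuff})^2)$, which I must verify is $O(n^2/p^{2(1-\delta)})$).

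The main obstacle I expect is the exponent bookkeeping in the middle stage: making sure that the choice $\zeta=(1-\delta)/\delta$ and $k=2$ really does make the $\beta$-term and $\alpha$-term invariant across all $\log p$ band-to-band steps, and that at the boundary band-width $b=n/p^{2-3\delta}$ (where the full-to-band stage hands off) these invariant per-step costs are exactly $O(n^2/p^\delta)$ and $O(p^\delta\log p)$ respectively rather than something larger. A secondary subtlety is confirming that all vertical-communication contributions beyond the explicit extra term in Lemma~\ref{lem:parsymdir} are genuinely dominated — this rests on the $\nu\le\beta$ assumption and the fact that the building-block lemmas (parallel QR, parallel matmul, CA-SBR) have vertical cost subsumed by horizontal cost — and on checking the cache hypothesis $\cachesize>3n^2/p^{2(1-\delta)}$ is what produces the clean $\nu\cdot n^2\log p/p^\delta$ bound versus the fallback; I would state the theorem's vertical term as the bound that holds in general (i.e.\ including the fallback contribution, which as computed above is still $O(\nu\cdot n^2\log p/p^\delta)$ under $\delta\le 2/3$).
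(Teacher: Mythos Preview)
Your proposal is correct and follows essentially the same phase-by-phase argument as the paper: invoke Lemma~\ref{lem:parsymdir} for the full-to-band stage (bounding the extra vertical term via $n/b=\max(p^{2-3\delta},\log p)$ and $2(1-\delta)\ge\delta$), then use the invariance $\bar b^{1-\delta}/\bar p^\delta = b^{1-\delta}/p^\delta$ (equivalently your identity from $\zeta\delta=1-\delta$) to keep the per-step $\beta$- and $\alpha$-costs of the $O(\log p)$ invocations of Lemma~\ref{lem:sbr} constant, then apply Lemma~\ref{lem:sbr_topc} with $p^\delta$ processors for the CA-BR stage, and finish with the sequential eigensolve on a band-width-$n/p$ matrix. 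Your memory check and your resolution of the CA-BR processor count also match the paper's treatment.
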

\begin{proof}
The cost of the gather/redistribution of $B$ onto $\bar{\Pi}$ is dominated by the subsequent 2.5D-Band-to-Band invocation.
The cost of computing the eigenvalues of $B$ sequentially at the end is $O(\gamma\cdot n^3/p+\beta\cdot n^2/p+\alpha)$, since the band-width is $n/p$~\cite{BDK13-TR}.
We employ Lemma~\ref{lem:parsymdir} with $b=\frac{n}{\max(p^{2-3\delta},\log p)}$ to obtain the cost of 2.5D-Full-to-Band.
The computation, horizontal communication, and synchronization costs are the same for the call to 2.5D-Full-to-Band as the overall costs postulated in Theorem~\ref{thm:parse}.
The vertical communication cost term incurred for small cache sizes, $O(\nu\cdot(n/b)n^2/p^{2(1-\delta)})$ is bounded by $O(\nu\cdot[ n^2/p^\delta+n^2\log p/p^{2/3}])=O(\nu\cdot n^2\log p/p^\delta)$. 
We now consider the memory footprint and cost of the invocations of 2.5D-Band-to-Band.
By Lemma~\ref{lem:sbr} with $k=2$, the memory usage is $M=O((n^{1-\delta}\bar{b}^{\delta}/\bar{p}^{1-\delta})^2)$,
where $\bar{b} = b/k^i$ where $\bar{p}=p/k^{i\zeta}$ at iteration $i$.
We observe that $(n^{1-\delta}\bar{b}^{\delta}/\bar{p}^{1-\delta})^2=O(n^2/p^{2(1-\delta)})$ for all iterations $i$, because at each subsequent iteration $\bar{b}$ decreases by $k$ while $\bar{p}$ decreases by $k^\zeta$, and so $\bar{b}^{\delta}/\bar{p}^{1-\delta}\leq b^\delta/p^{1-\delta}\leq n^{\delta}/p^{1-\delta}$ for all $i$, since
\(k^{(1-\delta)\zeta}/k^{\delta }=k^{(1-\delta)^2/\delta}/k^{\delta}\leq k^{1-\delta}/k^{\delta}\leq 1.\)
The cost of each band reduction with starting band-width $\bar{b}$ and $\bar{p}$ processors is by Lemma~\ref{lem:sbr} with $k=2$,
{\mathsmall \[O\bigg(\gamma\cdot \frac{n^2\bar{b}}{\bar{p}} + \beta\cdot \frac{n^{1+\delta}\bar{b}^{1-\delta}}{\bar{p}^\delta} + \alpha\cdot \frac{n^{1-\delta}\bar{p}^\delta}{\bar{b}^{1-\delta}}\log{p} \bigg).\]}
The computation cost clearly decreases with each iteration $i$.
The horizontal communication cost is $O(nb/p^\delta)=O(n^2/(p^\delta \log p))$ (since $b\leq n/\log p$) at each iteration, since
{\mathsmall \[\frac{\bar{b}^{1-\delta}}{\bar{p}^{\delta}} = \frac{(b/k^i)^{1-\delta}}{(p/k^{i\zeta})^\delta}=\frac{b^{1-\delta}}{p^\delta}.\]}
Therefore, over all $O(\log p)$ iterations, the bandwidth cost of the SBR invocations is $O(n^2/p^\delta)$.
Finally, the synchronization cost is $\frac{n^{1-\delta}\bar{p}^\delta}{\bar{b}^{1-\delta}}\log{p}=O(p^\delta\log{p})$ at each iteration, since $\bar{p}^\delta/\bar{b}^{1-\delta}=p^\delta/b^{1-\delta}$.
Thus, the overall synchronization cost is bounded by the cost postulated in the theorem.

The time to execute CA-BR using $p^\delta$ processors starting from band-width $n/p^\delta$ and reducing it to $n/p$ is via Lemma~\ref{lem:sbr_topc}, 
$O(\gamma\cdot \frac{n^3}{p^{2\delta}} + \beta\cdot \frac{n^2}{p^\delta} +\nu\cdot \frac{n^2\log p}{p^\delta}+ \alpha\cdot p^\delta \log p)$.
\end{proof}
A disadvantage of this multi-stage approach arises when eigenvectors are required in addition to eigenvalues.
The cost of the back-transformations scales linearly with the number of band-reduction stages (each stage requires $O(n^2)$ memory and $O(n^3)$ computation).
We leave the consideration of eigenvector construction for future work.
To reduce the number of band-reduction stages when $\delta <2/3$, one can use $k=p^{2-3\delta}$ with each invocation of 2.5D-Band-to-Band, but this results in a greater synchronization cost.
It may also be possible to improve the costs of the 2.5D-Band-to-Band algorithm, by using aggregation as in the 2.5D-Full-to-Band algorithm.

\section{Conclusion}
\label{sec:conc}

\begin{table}
{\footnotesize

\centering

\begin{tabular}{|l|c|c|c|}
\hline
Algorithm & $W$ ($\beta$) & $Q$ ($\nu$) & $S$ ($\alpha$)  \\
\hline \\ [-2.4ex] 
ScaLAPACK~\cite{SCALAPACK} & $n^2/\sqrt{p}$ & $n^3/p$ & $n\log p$ \\
ELPA~\cite{auckenthaler2012highly} & $n^2/\sqrt{p}$ & - & $n\log p$ \\
CA-SBR~\cite{BDK13-TR} & $n^2/\sqrt{p}$ & $n^2\log n/\sqrt p$ & $\sqrt{p}(\log^2p+\log n)$ \\
Theorem~\ref{thm:parse}& $n^2/p^{\delta}$ & $n^2\log p/p^{\delta}$ & $p^{\delta}\log^2 p$ 
\\ 
\hline
\end{tabular}
\vspace{.05cm}
}
\caption{Asymptotic communication costs for computing eigenvalues (with $\delta \in[1/2,2/3]$). All variants require $O(n^3/p)$ computation.}
\label{tab:compse}
\end{table}

Table~\ref{tab:compse} provides a comparison of communication and synchronization costs to previous work.
Our new direct method for computing the eigenvalues of a symmetric matrix, performs up to $p^{1/6}$ less horizontal communication than alternatives.
The vertical communication cost ($Q$) for ScaLAPACK assumes $H<n^2/p$ and arises from the matrix-vector multiplications computing $V$ for each column.
For CA-SBR, $Q$ is inferred from Lemma~\ref{lem:sbr_topc}.
For ELPA, we assume the full-to-band step reduces to band-width $b=\sqrt{H}$, in which case either (when $\sqrt{H}>n/p$) the banded matrix fits in cache, or
$\nu\cdot Q=O(\nu\cdot [n^3/(pb)+nb^2])= O(\gamma\cdot F/\sqrt{H})$~\cite{auckenthaler2012highly}.

The new 2.5D-Symmetric-Eigensolver algorithm trades off a variable amount of extra work, synchronization, and memory usage for a lower communication cost.
Implementations of the algorithms in this paper permit optimizations such as
\begin{itemize}
\item alternating between left-looking partial updates and complete trailing matrix updates in Algorithm~\ref{alg:parDRSE},
\item smaller bulge width in Algorithm~\ref{alg:sebra} to increase parallelism in the bulge chase pipeline,
\item lookahead~\cite{agarwal1988parallel,strazdins1998comparison} (overlapping QR with updates).
\end{itemize}

Our analysis shows that a carefully parameterized collage of parallel algorithms and optimizations yields asymptotic cost improvements with minimal overhead.
We combine approaches (2.5D algorithms, aggregation, successive band reduction) that have been successful on modern architectures~\cite{SBD_SC_2011,BDGJNS_IPDPS_2014,Auckenthaler2011272}, so our innovations should pave the path for practical improvements in scalability of applications computing singular values or eigenvalues of matrices.

\bibliographystyle{IEEEtran}
\bibliography{paper}

\begin{thebibliography}{10}
\providecommand{\url}[1]{#1}
\csname url@samestyle\endcsname
\providecommand{\newblock}{\relax}
\providecommand{\bibinfo}[2]{#2}
\providecommand{\BIBentrySTDinterwordspacing}{\spaceskip=0pt\relax}
\providecommand{\BIBentryALTinterwordstretchfactor}{4}
\providecommand{\BIBentryALTinterwordspacing}{\spaceskip=\fontdimen2\font plus
\BIBentryALTinterwordstretchfactor\fontdimen3\font minus
  \fontdimen4\font\relax}
\providecommand{\BIBforeignlanguage}[2]{{%
\expandafter\ifx\csname l@#1\endcsname\relax
\typeout{** WARNING: IEEEtran.bst: No hyphenation pattern has been}%
\typeout{** loaded for the language `#1'. Using the pattern for}%
\typeout{** the default language instead.}%
\else
\language=\csname l@#1\endcsname
\fi
#2}}
\providecommand{\BIBdecl}{\relax}
\BIBdecl

\bibitem{Dhillon:2006:DIM:1186785.1186788}
I.~S. Dhillon, B.~N. Parlett, and C.~V\"{o}mel, ``The design and implementation
  of the {MRRR} algorithm,'' \emph{ACM Transactions on Mathematical Software},
  vol.~32, no.~4, pp. 533--560, Dec. 2006.

\bibitem{PSP:1733252}
D.~R. Hartree, ``The wave mechanics of an atom with a non-coulomb central
  field. {P}art {I}. {T}heory and methods,'' \emph{Mathematical Proceedings of
  the Cambridge Philosophical Society}, vol.~24, pp. 89--110, 1 1928.

\bibitem{fock_zp_1930}
\BIBentryALTinterwordspacing
V.~Fock, ``\BIBforeignlanguage{German}{N\"{a}herungsmethode zur {L}\"{o}sung
  des quantenmechanischen {M}ehrk\"{o}rperproblems},''
  \emph{\BIBforeignlanguage{German}{Zeitschrift f\"{u}r Physik}}, vol.~61, no.
  1-2, pp. 126--148, 1930. [Online]. Available:
  \url{http://dx.doi.org/10.1007/BF01340294}
\BIBentrySTDinterwordspacing

\bibitem{valiant1990bridging}
L.~G. Valiant, ``A bridging model for parallel computation,''
  \emph{Communications of the ACM}, vol.~33, no.~8, pp. 103--111, 1990.

\bibitem{snirmatmul}
A.~Aggarwal, A.~K. Chandra, and M.~Snir, ``Communication complexity of
  {PRAM}s,'' \emph{Theoretical Computer Science}, vol.~71, no.~1, pp. 3 -- 28,
  1990.

\bibitem{Tiskin2007179}
A.~Tiskin, ``Communication-efficient parallel generic pairwise elimination,''
  \emph{Future Generation Computer Systems}, vol.~23, no.~2, pp. 179 -- 188,
  2007.

\bibitem{SD_EUROPAR_2011}
\BIBentryALTinterwordspacing
E.~Solomonik and J.~Demmel, ``Communication-optimal parallel {2.5D} matrix
  multiplication and {LU} factorization algorithms,'' in \emph{Euro-Par 2011
  Parallel Processing}, ser. Lecture Notes in Computer Science.\hskip 1em plus
  0.5em minus 0.4em\relax Springer Berlin Heidelberg, 2011, vol. 6853, pp.
  90--109. [Online]. Available:
  \url{http://dx.doi.org/10.1007/978-3-642-23397-5_10}
\BIBentrySTDinterwordspacing

\bibitem{greygeneral2010}
G.~Ballard, J.~Demmel, O.~Holtz, and O.~Schwartz, ``Minimizing communication in
  numerical linear algebra,'' \emph{SIAM Journal on Matrix Analysis and
  Applications}, vol.~32, no.~3, pp. 866--901, 2011.

\bibitem{SCKD_TECHREP_2013}
\BIBentryALTinterwordspacing
E.~Solomonik, E.~Carson, N.~Knight, and J.~Demmel, ``Tradeoffs between
  synchronization, communication, and computation in parallel linear algebra
  computations,'' in \emph{Proceedings of the 26th ACM Symposium on Parallelism
  in Algorithms and Architectures}, ser. SPAA '14.\hskip 1em plus 0.5em minus
  0.4em\relax ACM, 2014, pp. 307--318. [Online]. Available:
  \url{http://doi.acm.org/10.1145/2612669.2612671}
\BIBentrySTDinterwordspacing

\bibitem{SBR1}
C.~Bischof, B.~Lang, and X.~Sun, ``A {F}ramework for {S}ymmetric {B}and
  {R}eduction,'' \emph{ACM Transactions on Mathematical Software}, vol.~26,
  no.~4, pp. 581--601, Dec 2000.

\bibitem{SBR2}
------, ``Algorithm 807: {T}he {SBR} {T}oolbox -- {S}oftware {S}uccessive
  {B}and {R}eduction,'' \emph{ACM Transactions on Mathematical Software},
  vol.~26, no.~4, pp. 602--616, Dec 2000.

\bibitem{BDK13-TR}
\BIBentryALTinterwordspacing
G.~Ballard, J.~Demmel, and N.~Knight, ``Avoiding communication in successive
  band reduction,'' \emph{ACM Transactions on Parallel Computing}, vol.~1,
  no.~2, pp. 11:1--11:37, Feb. 2015. [Online]. Available:
  \url{http://doi.acm.org/10.1145/2686877}
\BIBentrySTDinterwordspacing

\bibitem{Auckenthaler2011272}
T.~Auckenthaler, H.-J. Bungartz, T.~Huckle, L.~Kr{\"a}mer, B.~Lang, and
  P.~Willems, ``Developing algorithms and software for the parallel solution of
  the symmetric eigenvalue problem,'' \emph{Journal of Computational Science},
  vol.~2, no.~3, pp. 272 -- 278, 2011, social Computational Systems.

\bibitem{haidar2011parallel}
\BIBentryALTinterwordspacing
A.~Haidar, H.~Ltaief, and J.~Dongarra, ``Parallel reduction to condensed forms
  for symmetric eigenvalue problems using aggregated fine-grained and
  memory-aware kernels,'' in \emph{Proceedings of 2011 International Conference
  for High Performance Computing, Networking, Storage and Analysis}, ser. SC
  '11.\hskip 1em plus 0.5em minus 0.4em\relax New York, NY, USA: ACM, 2011, pp.
  8:1--8:11. [Online]. Available:
  \url{http://doi.acm.org/10.1145/2063384.2063394}
\BIBentrySTDinterwordspacing

\bibitem{SCALAPACK}
L.~S. Blackford, J.~Choi, A.~Cleary, E.~{D'Azevedo}, J.~Demmel, I.~Dhillon,
  J.~Dongarra, S.~Hammarling, G.~Henry, A.~Petitet, K.~Stanley, D.~Walker, and
  R.~C. Whaley, \emph{{ScaLAPACK Users' Guide}}.\hskip 1em plus 0.5em minus
  0.4em\relax Philadelphia, PA, USA: SIAM, May 1997, also available from
  http://www.netlib.org/scalapack/.

\bibitem{demmel:A206}
J.~Demmel, L.~Grigori, M.~Hoemmen, and J.~Langou, ``Communication-optimal
  parallel and sequential {QR} and {LU} factorizations,'' \emph{SIAM Journal on
  Scientific Computing}, vol.~34, no.~1, pp. A206--A239, 2012.

\bibitem{Jia-Wei:1981:ICR:800076.802486}
H.~Jia-Wei and H.~T. Kung, ``{I/O} complexity: The red-blue pebble game,'' in
  \emph{Proceedings of the thirteenth annual ACM symposium on Theory of
  computing}, ser. STOC '81.\hskip 1em plus 0.5em minus 0.4em\relax New York,
  NY, USA: ACM, 1981, pp. 326--333.

\bibitem{FLPR99}
M.~Frigo, C.~E. Leiserson, H.~Prokop, and S.~Ramachandran, ``Cache-oblivious
  algorithms,'' in \emph{Proceedings of the 40th Annual Symposium on
  Foundations of Computer Science}, ser. FOCS '99.\hskip 1em plus 0.5em minus
  0.4em\relax Washington, DC, USA: IEEE Computer Society, 1999, p. 285.

\bibitem{mccol_tiskin_99}
W.~F. McColl and A.~Tiskin, ``Memory-efficient matrix multiplication in the
  {BSP} model,'' \emph{Algorithmica}, vol.~24, pp. 287--297, 1999.

\bibitem{dekel:657}
E.~Dekel, D.~Nassimi, and S.~Sahni, ``Parallel matrix and graph algorithms,''
  \emph{SIAM Journal on Computing}, vol.~10, no.~4, pp. 657--675, 1981.

\bibitem{matmul3d}
R.~C. Agarwal, S.~M. Balle, F.~G. Gustavson, M.~Joshi, and P.~Palkar, ``A
  three-dimensional approach to parallel matrix multiplication,'' \emph{IBM
  Journal of Research and Development}, vol.~39, pp. 575--582, September 1995.

\bibitem{berntsen1989communication}
J.~Berntsen, ``Communication efficient matrix multiplication on hypercubes,''
  \emph{Parallel Computing}, vol.~12, no.~3, pp. 335--342, 1989.

\bibitem{Johnsson:1993:MCT:176639.176642}
S.~L. Johnsson, ``Minimizing the communication time for matrix multiplication
  on multiprocessors,'' \emph{Parallel Computing}, vol.~19, pp. 1235--1257,
  November 1993.

\bibitem{demmel2013communication}
J.~Demmel, D.~Eliahu, A.~Fox, S.~Kamil, B.~Lipshitz, O.~Schwartz, and
  O.~Spillinger, ``Communication-optimal parallel recursive rectangular matrix
  multiplication,'' in \emph{Proceedings of the 27th IEEE International
  Symposium on Parallel and Distributed Processing}, ser. IPDPS '13, May 2013,
  pp. 261--272.

\bibitem{Geijn:SUMMA:1997}
R.~A. Van De~Geijn and J.~Watts, ``{SUMMA}: {S}calable {U}niversal {M}atrix
  {M}ultiplication {A}lgorithm,'' \emph{Concurrency: Practice and Experience},
  vol.~9, no.~4, pp. 255--274, 1997.

\bibitem{BDGJNS_IPDPS_2014}
G.~Ballard, J.~Demmel, L.~Grigori, M.~Jacquelin, H.~D. Nguyen, and
  E.~Solomonik, ``Reconstructing {H}ouseholder vectors from tall-skinny {QR},''
  in \emph{Proceedings of the 28th IEEE International Symposium on Parallel and
  Distributed Processing}, ser. IPDPS '14, May 2014, pp. 1159--1170.

\bibitem{golub1986parallel}
G.~H. Golub, R.~J. Plemmons, and A.~Sameh, \emph{Parallel block schemes for
  large-scale least-squares computations}.\hskip 1em plus 0.5em minus
  0.4em\relax University of Illinois Press, 1986.

\bibitem{Gunter:2005:POC:1055531.1055534}
\BIBentryALTinterwordspacing
B.~C. Gunter and R.~A. Van De~Geijn, ``Parallel out-of-core computation and
  updating of the {QR} factorization,'' \emph{ACM Transactions on Mathematical
  Software}, vol.~31, no.~1, pp. 60--78, Mar. 2005. [Online]. Available:
  \url{http://doi.acm.org/10.1145/1055531.1055534}
\BIBentrySTDinterwordspacing

\bibitem{da2002new}
R.~D. da~Cunha, D.~Becker, and J.~C. Patterson, ``New parallel (rank-revealing)
  {QR} factorization algorithms,'' in \emph{Euro-Par 2002 Parallel
  Processing}.\hskip 1em plus 0.5em minus 0.4em\relax Springer, 2002, pp.
  677--686.

\bibitem{EG98}
E.~Elmroth and F.~Gustavson, ``New serial and parallel recursive {QR}
  factorization algorithms for {SMP} systems,'' in \emph{Applied Parallel
  Computing. Large Scale Scientific and Industrial Problems.}, ser. Lecture
  Notes in Computer Science, B.~K. et~al., Ed.\hskip 1em plus 0.5em minus
  0.4em\relax Springer, 1998, vol. 1541, pp. 120--128.

\bibitem{ES_dissertation_2014}
E.~Solomonik, ``Provably efficient algorithms for numerical tensor algebra,''
  Ph.D. dissertation, University of California, Berkeley, 2014.

\bibitem{Tiskin_LU_2002}
\BIBentryALTinterwordspacing
A.~Tiskin, ``\BIBforeignlanguage{English}{Bulk-synchronous parallel {G}aussian
  elimination},'' \emph{\BIBforeignlanguage{English}{Journal of Mathematical
  Sciences}}, vol. 108, pp. 977--991, 2002. [Online]. Available:
  \url{http://dx.doi.org/10.1023/A%3A1013588221172}
\BIBentrySTDinterwordspacing

\bibitem{dongarra1989block}
J.~J. Dongarra, D.~C. Sorensen, and S.~J. Hammarling, ``Block reduction of
  matrices to condensed forms for eigenvalue computations,'' \emph{Journal of
  Computational and Applied Mathematics}, vol.~27, no.~1, pp. 215--227, 1989.

\bibitem{Dongarra1992973}
\BIBentryALTinterwordspacing
J.~J. Dongarra and R.~A. van~de Geijn, ``Reduction to condensed form for the
  eigenvalue problem on distributed memory architectures,'' \emph{Parallel
  Computing}, vol.~18, no.~9, pp. 973 -- 982, 1992. [Online]. Available:
  \url{http://www.sciencedirect.com/science/article/pii/016781919290011U}
\BIBentrySTDinterwordspacing

\bibitem{Joffrain:2006:AHT:1141885.1141886}
\BIBentryALTinterwordspacing
T.~Joffrain, T.~M. Low, E.~S. Quintana-Ort\'{\i}, R.~v.~d. Geijn, and F.~G.~V.
  Zee, ``Accumulating {H}ouseholder transformations, revisited,'' \emph{ACM
  Transactions on Mathematical Software}, vol.~32, no.~2, pp. 169--179, Jun.
  2006. [Online]. Available: \url{http://doi.acm.org/10.1145/1141885.1141886}
\BIBentrySTDinterwordspacing

\bibitem{doi:10.1137/0914078}
\BIBentryALTinterwordspacing
B.~Lang, ``A parallel algorithm for reducing symmetric banded matrices to
  tridiagonal form,'' \emph{SIAM Journal on Scientific Computing}, vol.~14,
  no.~6, pp. 1320--1338, 1993. [Online]. Available:
  \url{http://dx.doi.org/10.1137/0914078}
\BIBentrySTDinterwordspacing

\bibitem{auckenthaler2012highly}
T.~Auckenthaler, ``Highly scalable eigensolvers for petaflop applications,''
  Ph.D. dissertation, Universit{\"a}t M{\"u}nchen, 2012.

\bibitem{agarwal1988parallel}
R.~C. Agarwal and F.~G. Gustavson, ``A parallel implementation of matrix
  multiplication and {LU} factorization on the {IBM} 3090,'' in
  \emph{Proceedings of the IFIP WG}, vol.~2, 1988, pp. 217--221.

\bibitem{strazdins1998comparison}
P.~Strazdins, ``A comparison of lookahead and algorithmic blocking techniques
  for parallel matrix factorization,'' \emph{International Journal Parallel and
  Distributed Systems and Networks}, vol.~4, no.~1, pp. 26--35, 2001.

\bibitem{SBD_SC_2011}
\BIBentryALTinterwordspacing
E.~Solomonik, A.~Bhatele, and J.~Demmel, ``Improving communication performance
  in dense linear algebra via topology aware collectives,'' in
  \emph{Proceedings of 2011 International Conference for High Performance
  Computing, Networking, Storage and Analysis}, ser. SC '11.\hskip 1em plus
  0.5em minus 0.4em\relax New York, NY, USA: ACM, 2011, pp. 77:1--77:11.
  [Online]. Available: \url{http://doi.acm.org/10.1145/2063384.2063487}
\BIBentrySTDinterwordspacing

\end{thebibliography}

\end{document}